\newcommand{\set}[1]{\ensuremath{\left\{#1\right\}}}	% math set
\newcommand{\parans}[1]{\ensuremath{\left(#1\right)}}	% big paranthesis
\DeclareMathOperator*{\argmax}{arg\,max}
\newcommand{\remove}[1]{}
\newtheorem{lem}{Lemma}
\newtheorem{thm}{Theorem}
\begin{document}
\title{Revisiting Client Puzzles for State Exhaustion Attacks Resilience}
\subtitle{Can Proof-of-Work Actually Work?}

\begin{abstract}
In this paper, we
address the challenges facing the adoption of client puzzles as means to protect
the TCP connection establishment channel from state exhaustion DDoS attacks. We model the problem of selecting
the puzzle difficulties as a Stackelberg game with the server as the leader and the clients as the followers and 
obtain the equilibrium solution for the puzzle difficulty. We then present an implementation of client puzzles 
inside the TCP stack of the Linux 4.13.0 kernel. 
We evaluate the performance of our implementation and the obtained solution against a range of attacks through 
experiments on the DETER testbed. Our results show that client puzzles are effective at boosting 
the tolerance of the TCP handshake channel to state exhaustion DDoS attacks by rate limiting the flood rate 
of malicious attackers while allocating resources for legitimate clients. Our results illustrate the 
benefits that the servers and clients amass from the deployment of TCP client puzzles and incentivize their adoption 
%as means for multi-vectored DDoS attacks tolerance. 
as means to enhance tolerance to multi-vectored DDoS attacks
\end{abstract}

% TODO: replace this section with code generated by the tool at https://dl.acm.org/ccs.cfm
\begin{CCSXML}
<ccs2012>
<concept>
<concept_id>10002978.10003014.10011610</concept_id>
<concept_desc>Security and privacy~Denial-of-service attacks</concept_desc>
<concept_significance>500</concept_significance>
</concept>
<concept>
<concept_id>10002978.10003014.10003016</concept_id>
<concept_desc>Security and privacy~Web protocol security</concept_desc>
<concept_significance>300</concept_significance>
</concept>
<concept>
<concept_id>10002978.10003029.10003031</concept_id>
<concept_desc>Security and privacy~Economics of security and privacy</concept_desc>
<concept_significance>100</concept_significance>
</concept>
<concept>
<concept_id>10003033.10003039.10003048</concept_id>
<concept_desc>Networks~Transport protocols</concept_desc>
<concept_significance>300</concept_significance>
</concept>
</ccs2012>
\end{CCSXML}

\ccsdesc[500]{Security and privacy~Denial-of-service attacks}
\ccsdesc[300]{Security and privacy~Web protocol security}
\ccsdesc[100]{Security and privacy~Economics of security and privacy}
\ccsdesc[300]{Networks~Transport protocols}

\keywords{Denial of Service Attacks, Proof-of-Work, Stackelberg Games, Transport Control Protocol} 

\author{Mohammad A. Noureddine}
\affiliation{%
  \institution{Department of Computer Science \\ University of Illinois at Urbana-Champaign}
  \city{Urbana}
  \state{IL}
}
\email{nouredd2@illinois.edu}

\author{Ahmed Fawaz}
\affiliation{%
  \institution{Department of Electrical and Computer Engineering \\ University of Illinois at
  Urbana-Champaign}
  \city{Urbana}
  \state{IL}
}
\email{afawaz2@illinois.edu}

\author{Tamer Ba\c{s}ar}
\affiliation{%
  \institution{Coordinated Science Laboratory \\ University of Illinois at
  Urbana-Champaign}
  \city{Urbana}
  \state{IL}
}
\email{basar1@illinois.edu}

\author{William H. Sanders}
\affiliation{%
  \institution{Department of Electrical and Computer Engineering \\ University of Illinois at
  Urbana-Champaign}
  \city{Urbana}
  \state{IL}
}
\email{whs@illinois.edu}

\maketitle

\section{Introduction} \label{s:intro}
In recent years, the scale and complexity of Distributed Denial of Service (DDoS) attacks have grown significantly. 
The introduction of DDoS-for-hire services has substantially decreased the cost of launching complex, 
multi-vectored attacks aimed at saturating the bandwidth as well as the state of a victim server, with 
the possibility of inflicting severe damages at lower costs~\cite{anreport2018, dyn2016}. 
%Enlisting the widespread Internet of Things (IoT) devices 
%as part of their botnet armies, the new class of relatively low-cost attackers can conduct unprecedented large-scale 
%DDoS attacks~\cite{Akamai16} targeting critical Internet infrastructure~\cite{dyn2016}. 

%Current DDoS defense techniques, often ineffective, filter malicious  from legitimate client traffic at the end hosts~\cite{Xiaowei2008}  
% or the Internet Service Provider (ISP)~\cite{fayaz2015} levels. This filtering, however, comes at the high cost of
% processing power, changes to the Internet infrastructure (core routers),  or is hurdled by the privacy and legal
% concerns facing inter-ISP solutions. Instead, administrators depend on the costly elastic allocation of resources to
% tolerate DDoS attacks, through overallocation, geo-replication, and the use of Content Delivery Networks (CDN).
% However, as witnessed by the recent incidents~\cite{dyn2016,Akamai16}, the scale of IoT-enable DDoS attacks are
% outpacing those allocation schemes, further exasperating the cost imbalance between launching DDoS attacks and tolerating them.

Common mitigations to large-scale DDoS attacks are focused around cloud-based protection-as-a-service providers, such as
CloudFlare. 
When under attack, a victim's traffic is redirected to massively
over-provisioned servers where proprietary traffic filtering techniques
are applied and only traffic deemed benign is forwarded to the victim. 
%Such schemes are attractive since they outsource
%the burden of DDoS mitigation to the protection service providers that have copious resources to absorb {\em volumetric}
%attacks. 
The relative success of such over-provisioning techniques in absorbing {\em volumetric} attacks has pushed attackers to
expand their arsenal of attack vectors to span multiple layers of the OSI network stack~\cite{ddosthreatq417}.
%However, as witnessed by recent instances, the majority of attacks are multi-vectored, spanning multiple layers
%of the OSI network stack~\cite{ddosthreatq417}. 
%%%%%%%%%%%% TODO: Need to say something here about the attacks being harder to detect through regular filtering since
%%%%%%%%%%%% they do not require as much bandwidth since they can bypass the could protections. 
In fact, $39.8\%$ of the attacks launched through the Mirai botnet  were aimed at TCP state exhaustion while
$32.8\%$ were volumetric~\cite{mirai_usenix}; the Mirai source code contained more than 10 vectors in its arsenal 
of attacks~\cite{mirai_source}. 

State exhaustion and application layer attacks are particularly challenging.
Attackers can masquerade such attacks as benign traffic by leveraging a large number of machines that can use their
authentic IP addresses~\cite{anreport2018}, and can thus bypass 
%Armed with a large number of zombie machines that can use their authentic IP addresses, 
%since they do not require high volume of
%traffic to be successful and can thus often bypass 
cloud-based protection services, capabilities,
and filtering techniques~\cite{Parno2007,Xiaowei2008,Xiaowei2005,liu2010,liu2008,mahajan2002}. 
This is further exacerbated by 
%Furthermore, the aformentioned approaches do not address the challenges induced by 
the imbalance between the cost of launching a
multi-vectored DDoS attack and the cost of mitigating one. Launching an attack incurs an average cost of $\$66$ per
attack and can cause potential damage to the victim of around $\$500$ per minute, not including the costs paid to the
protection service provider~\cite{arbor2017}. 
%Launching multi-vector
%attacks can thus exharcerbate the impacts of DDoS attacks. 
Facing this hybrid and imbalanced attack landscape, it is essential that we develop mitigation
techniques that can defend against the different attack vectors involved in recent DDoS attacks.

In this paper, we revisit the application of client puzzles as a mechanism to resist state exhaustion DDoS attacks. 
Client puzzles are a promising technique that alleviates the cost imbalance between the attacker and the defender with 
only software-level modifications at the end hosts 
and no changes to the Internet infrastructure~\cite{feng03case, juels1999}.  
%When under attack, a content provider asks its clients to present a computational effort, referred to as {\em proof-of-work}, 
%as payment for its services. 
%By forcing clients to present computation proof-of-work for every request, 
At their essence, 
client puzzles attempt to hinder the malicious actors' ability to flood the server at low cost by 
forcing all clients, both benign and malicious,
%requiring them 
to solve computational puzzles for each request they make.
%
%Client puzzles have been adopted as a proof-of-work mechanism in digital cryptocurrencies~\cite{nakamoto2008} and have been suggested to combat spam e-mail~\cite{dwork1993}.
%When employed to protect against state exhaustion attacks, client puzzles may be applied to the TCP layer, TLS layer, or the
%application layer~\cite{wang2003,dean2001}. 
%In this work, we further explore the use of TCP puzzles to enhance resistance to TCP state exhaustion attacks, as they allow a defender to protect a wider range of applications.  
%A server running TCP client puzzles requires a client to present proof-of-work during  the three-way handshake to receive service. 
%At a high level, TCP client puzzles improve state exhaustion resistance by forcing attackers to slow down their request
%flood rate, and only allocating state when the requester commits resources by solving the puzzle.
%Intuitively, TCP client puzzles force a botnet to slow down its request flood rate to tolerate the attack until proper mitigation is applied. 
%TCP client puzzles are especially effective against low-power IoT-based attackers; the computational burden of the puzzles diminishes the potency of those botnets.
 
 %of the request rate of the users (both legitimate and malicious)
 
While TCP client puzzles provide a promising technique to resist state exhaustion attacks, 
they have not seen their way into adoption due to (1) the lack of guidelines on the difficulty setting,
and (2) the lack of publicly available implementations and performance studies~\cite{wang2003, feng03case, ietf-draft}.  
A TCP client puzzle's difficulty determines the computational burden placed on the server and clients --
an easy puzzle offers better usability at the cost of security while a harder puzzle provides better security but can
deter users. 
The lack of guidelines on the difficulty setting forces administrators to speculate whether to overload their servers or turn users away, leading them 
to overlook puzzles as an effective protection mechanism. 
Additionally, the few existing implementations of TCP client puzzles~\cite{feng2005, wang2003, wang2004, dean2001} are outdated and are not publicly 
available, further hindering the community's ability to evaluate their effectiveness and adopt them.

In this work, we make the following contributions to address the shortcomings in TCP client puzzles research. 
First, we introduce a theory for determining an appropriate TCP puzzle difficulty based on the game-theoretic 
Stackelberg interaction between the defender 
and the clients~\cite{basar2002,shen2007_2,shen2007} (Sections~\ref{s:framework} and~\ref{s:application}). 
%The theory sets the governing 
%principles for the difficulty setting as brought by the  Nash equilibrium. 
%It allows the defender to find the difficulty that maximizes attack tolerance without turning clients off due to the excessive burden. 
Using the theory we established, we provide a practical method to select the TCP puzzles difficulty based on the defender's capabilities 
and the expected computational prowess of the clients.

Then, we design, implement and evaluate an extension to TCP to support client puzzles using our practical difficulty setting method. 
We incorporate puzzles into the TCP handshake and do not interfere with the operation of the protocol otherwise. 
We efficiently encode the 
challenges and their solutions into the {\em options} of the TCP header resulting in a TCP puzzles extension with
low packet-size overhead. % to the protocol packets. % and no significant changes the TCP header. 
Then, we implemente TCP puzzles as part of the Linux kernel TCP stack (Section~\ref{s:implementation}). 
To the best of our knowledge, we provide the \emph{first} publicly available implementation of TCP puzzles for the Linux kernel\footnote{The open sourced patch for 
v4.13.0 is available at \url{https://github.com/nouredd2/linux}}. 
%Our implementation uses the least overhead to operate; before receiving a valid solution, we retain no state about the users. 
Our implementation maintains the statelessness property of client puzzles:
the server relies solely on the status of its internal queues to infer the presence of an attack and creates no state
until a puzzle is verified to be valid.
%It is optimized to minimize the server's cost of generating challenges and verifying solutions. TCP puzzles do not store
%any state related to the state of the clients. Instead, it relies only on the state of the backlog queue to determine if a
%challenge has to be sent. 

We evaluated the performance of our TCP puzzles against a range of attacks through 
experiments performed using the DETER testbed (Section~\ref{s:evaluation}).
Our results show the effectiveness of TCP puzzles in boosting tolerance against state exhaustion attacks. First, a server using
TCP client puzzles, configured to use the game-theory-based difficulty setting method, tolerates both SYN and
connection floods that would bring down an unprotected server or one that relies solely on SYN
cookies~\cite{cryptosyncookies}. Furthermore, clients willing to solve the challenges are
almost always able to receive service during an attack. Moreover, TCP puzzles result in negligible overhead for the server, 
while significantly increasing the cost of launching a DDoS attack -- the size of a botnet has to increase by a factor of 200, 
and IoT-based botnets become unable to launch such attacks -- thereby removing the low-cost assets in an attacker's arsenal. 
%Finally, we show that TCP puzzles, when configured using our guiding principles, are resilient against solution floods and replay attacks.
We believe that TCP client puzzles can be a strong companion to protection-as-a-service solutions to further resist
multi-vectored DDoS attacks.

\section{Background and Related Work}
%In this section, we focus on SYN flood attacks as an illustration of the
%shortcomings of current techniques in combatting resource exhaustion attacks. The same concepts, however, can be
%applied to different types of  resource
%exhaustion attacks (e.g., TLS floods).

In this section, we review the TCP three-way handshake and TCP state exhaustion attacks. 
We then present client puzzles and the limitations of the current approaches to using them for TCP state exhaustion attacks
resistance. For the remainder of this paper, we use the terms puzzles and challenges interchangeably.  

% TCP primer
\subsection{TCP primer and SYN flood attacks}
In current TCP implementations,
a client initiates a TCP connection by sending a SYN packet to the server. Upon receiving the SYN packet, the server saves state for this new incoming connection request in a data structure, often referred to as the Transmission Control Block (TCB), and then sends a SYN-ACK packet and waits for the client to acknowledge receiving this packet. A half-open connection is one for which the client's ACK packet has not been received yet; those new connection sockets are queued into a \texttt{listen} queue. 
%in order to speed up processing and bounding the server's
%memory usage, new connections sockets are queued into a \texttt{listen}
The number of
elements in this queue is upper bounded by an implementation parameter, called
the {\em backlog}, that
bounds the server's memory usage as to avoid having the system's resources exhausted.
Once a connection has been established, the server moves it into the \texttt{accept} 
queue.
%\footnote{%
%Some implementations maintain two separate queue for the \texttt{listen} and \texttt{accept} queues,
%while in other implementation they share a single data structure that is logically separated. 
%The Linux 4.13 kernel implementation maintains a single queue and we chose to
%use this standard for the rest of
%this paper. 
%We note, however, that the techniques presented in this paper apply
%regardless of either choice of implementation.
%}
% 
A socket is removed from the accept queue once the
server's application accepts it for
processing. On the other hand, a half-open connection socket is removed from
the queue if it expires before it receives an
acknowledgment from the client~\cite{rfc793}. 
Once the server's listen or accept queues overflow, 
it either (1) no longer accepts incoming connections or (2) drops old
connection sockets from the
appropriate queue to free space for newer connections. 

\begin{figure}
  \centering
  \begin{subfigure}[t]{0.45\columnwidth}
      \includegraphics[width=1.\textwidth]{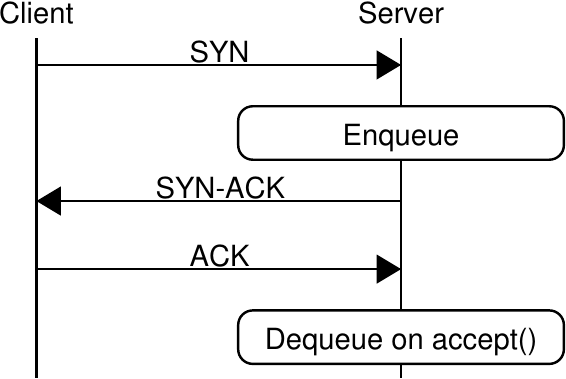}
    \label{fig:syn}
%    \caption{TCP handshake with no protection.}
  \vspace{-1em}
    \caption{}
    \vspace{1em}
  \end{subfigure}
  ~~
  \begin{subfigure}[t]{0.45\columnwidth}
    \centering
      \includegraphics[width=1.\textwidth]{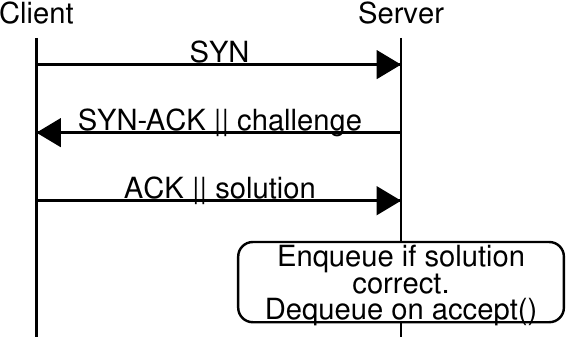}
    \label{fig:syn_puzzle}
%    \caption{TCP handshake with challenges enabled.}
  \vspace{-1em}
    \caption{}
    \vspace{1em}
  \end{subfigure}
  \caption{TCP three-way handshake with (a) no protection and (b) challenges enabled.}
\end{figure}

%\subsection{The TCP SYN flood attack}
TCP SYN flood attacks aim to overflow a victim server's \texttt{listen} queue by
overwhelming it with half-open connection requests.  The attack forces the server to drop new incoming connections denying new clients
from receiving service~\cite{rfc4987}.
The attacker's sending rate should be high enough to overflow the server's
queue before the connection reset timers
expire.
A variant of the TCP SYN flood attack is a TCP connection flood in which an attacker
attempts to overflow the server's \texttt{accept} queue for the same purpose of denying legitimate clients
the opportunity to connect to the server. In a connection flood, the attacker completes the three-way handshake instead of leaving the connections half-open.
%TCP SYN flood attacks occupied 21\% of the global DDoS threat landscape in the
%fourth quarter of 2017~\cite{ddosthreatq417}.
%As the number of Internet-connected devices grows, the size of botnets grows.
%Thus, this attack will remain a critical threat despite
%technological advancements in computing and memory. 

%The attack remains a critical threat, even as computing resources increase, as
%Attackers are able to beat servers' timers without much effort, as the
%resources available to attackers have been steadly increasing with the
%increase of internet connected devices.
%TCP SYN flood attacks occupied 23\% of the global DDoS threat landscape in the
%second quarter of
%2017~\cite{ddosthreat17}.
%\improvement{The sentence ``Using a wealth of \ldots'' needs improvement}

Among the server-based mitigations to SYN flood attacks, the SYN cache and TCP
SYN cookies are the most
common~\cite{rfc4987,lemon2002,cryptosyncookies}.
The SYN cache 
%attempts to reduce 
reduces the amount of memory needed to store state for
a half-open connection by delaying the
allocation of the full TCB state until the connection is established. Servers
implementing SYN caches instead maintain a
hash table for half-open connections that contains partial state
information and provides fast lookup and
insertion functions. 
%SYN caches additionally impose limits on the number of
%entries in the hash table as well as the
%number of entries in each hash bucket. 
SYN cookies, on the other hand, operate by eliminating the source of the vulnerability in TCP implementations: 
the state reserved for half-open connections in the TCB.
When SYN cookies are enabled, the server encodes a new TCP connection's parameters as a cookie in the packet's initial
sequence number, and refrains from allocating state for a new connection until the cookie is
again received from the client and validated. 

The SYN cache aims to contain TCP SYN attacks
by reducing the amount of state maintained on the server for half-open
connections. Although efficient against
a single attacker (or a small botnet), SYN caches do no provide protection
against larger botnets
for which the attack rate can easily exceed the space allocated for the cache.
Once the cache is full, the server
will default to the same behavior it performed when its backlog limit is
reached, defeating the purpose for which
the cache is used.
Although SYN cookies eliminate the key target of the SYN flood attack (the TCP
backlog), they do not provide protection against large botnets. Attackers in control of a
large number of zombie machines with valid (non-spoofed) IP addresses can, without added
effort, overload the server's \texttt{listen} queue with valid TCP requests at a rate that surpasses the server's ability to accept
them.
By only tackling the problem on the server end, 
SYN cookies do not present a mechanism to strip the malicious actors from the
ability to conduct exhaustion attacks; it is also not clear how SYN cookies can
be generalized to serve as protection schemes for different types of state exhaustion attacks~\cite{wang2003}.

\subsection{Client puzzles}
%\begin{figure}
%  \centering
%  \resizebox{0.5\columnwidth}{!}{%
%    \includegraphics{}
%  }
%  \caption{A sample puzzle protocol.\tochange{Add writings on the side}}
%  \label{fig:protocol}
%\end{figure}

% Client puzzles - briefly
Cryptographic client puzzles have been proposed to counter an asymmetry in today's Internet: the clients can request
substantial server resources at relatively little costs.  Client puzzles alleviate this asymmetry by forcing clients to
commit compute power as payment
%pay -- using computing power -- 
for the server resources that they are requesting, thus improving state exhaustion attacks resistance.
%In the case of a TCP SYN request, a client demanding a spot in the server's queue must present a solution to a cryptographic puzzle as proof of her need for the server's resources. Such an approach will hinder malicious attackers' ability to occupy space in the server's memory at virtually no cost of their own.

%Furthermore, when attackers employ 
%botnets comprised of millions of unsuspecting users, a sudden churn in the
%unsuspecting machine's processing
%jobs can alert its user to the presence of malware. 
%\improvement{This seems outside the scope. Maybe add it to a later section.}

Client puzzles have been previously proposed as a mechanism to combat junk mail~\cite{dwork1993}, website metering~\cite{franklin1997}, protecting the network IP and TCP channels~\cite{juels1999,mcnevin2004,wang2003,feng2005},
protecting the TLS connection setup~\cite{dean2001,ietf-draft}, and protecting the capabilities-granting channel~\cite{Parno2007}.
Additionally, client puzzles are at the heart of the mining process of
nowadays' crypto-currencies~\cite{nakamoto2008,bonneau2015}.
Upon receiving a SYN packet, the server computes a puzzle
challenge and sends it back to the client. At this time, the server does not commit any resources to the initiating
client. After receiving the challenge, the client will employ its
computational resources to solve the challenge and
send the solution to the server. If the solution is correct, the server then commits resources to the client 
and accepts the connection.
%resumes the standard TCP connection setup. 
Otherwise, the server
drops the connection.
For it not to break the three-way handshake, the server piggybacks the puzzle onto the SYN-ACK packet. The client can then solve the puzzle and send it along with its corresponding ACK packet.
%\improvement{Is this your contribution? or was it proposed all along\\
%Answer: No this is a minor engineering issue if one does not want to break
%TCP.}
%Although this process
%violates the standard TCP protocol, the challenge and its solution can be
%easily piggybacked on the SYN/ACK and the
%following ACK packet as we discuss later on in the paper. 

Despite its promise, several challenges face the adoption of client puzzles as a practical measure of defense against
state exhaustion attacks. In the following, we discuss the challenges and our methods to address them.
First, there is a shortage of implementations that allow for the
comparison and the evaluation of different types of challenge creation and verification mechanisms.
%Although the IETF has proposed a draft to include client puzzles as a
%protection mechanism against resource exhaustion
%attacks in the TLS connection setup channel, we are yet to find implementations
%of client puzzles in the commonly used
%TLS packages such as OpenSSL.
In this paper, we implement %TCP 
clients puzzles in the Linux 4.13.0 kernel and provide access to our implementation in the form of a kernel patch. 

Second, an important advantage of client puzzles is the ability to increase the difficulty of the challenges as the intensity of the attacks increases. 
However, there are no concrete and theoretically-backed recommendations for selecting the appropriate challenge difficulties. 
The work in~\cite{wang2003} attempts to alleviate this problem by requiring clients to place bids (computing resources) on the server's resources. 
A client starts by asking for service without committing resources. If the server refuses to provide service, the client
solves increasingly harder puzzles until the server accepts its connection request.
The server may reply to a failed attempt with an acceptable puzzle hardness for the client to solve.
This approach suffers from two main drawbacks. First, it violates the TCP protocol by adding one or more extra round trips to the connection establishment 
phase. More importantly, this mechanism can be exploited to target clients as it moves the puzzle initiation process from the server to the client. 
An attacker would congest a client's egress links to trick her into believing that the server is requiring her to solve
harder puzzles. The attack will lead the client either to choose to refuse service or to commit more resources than
needed. In this work, we believe that the process of determining the puzzle difficulty should remain the server's
task. We, therefore, present a game-theoretic model that can be used to determine an appropriate difficulty given the server's provisioning and load.

Laurie and Clayton~\cite{laurie2004proof} present an economic analysis to argue against the use of proof-of-work mechanisms 
to combat email spam. We agree with the authors' conclusion that computational puzzles do not possess ``magical'' properties
making them practical in every situation, and that proof-of-work must be properly analyzed before adoption; which is what 
we sought to achieve in this work. We argue that memory- and computational-resource exhaustion attacks are of a different nature
than spam emails. First, unlike spam, attacker benefits from TCP state exhaustion attacks do not depend on the involvement of 
a human user to click on malicious links.
Second, DDoS attacks nowadays are mostly launched from compromised botnet machines and not
from specialized attacker hardware. Thus, beyond the initial compromise cost, launching an attack is virtually free for the attacker. 
We believe that our theoretical and experimental results showcase the merits of proof-of-work mechanisms in tolerating SYN and
connection floods.
%, and consist a step in the right direction towards the analysis and adoption of computational puzzles 
%for DDoS resilience. 
In fact, our work complements the security analysis 
performed in~\cite{liqun2009security,groza2014crypto,douglas2011stronger} with the required protocol engineering 
and design steps, allowing for a rich and improved understanding of client puzzles for resilience to state exhaustion attacks. 

%Finally, client puzzles introduce
%the possibility of an attack against the puzzle verification mechanism. Since
%the server does not keep state after
%sending the challenge packet, an attacker can flood the server with a false
%challenge solutions and force it into
%spending much of its computing resources cryptographically verifying the
%solutions.
%We provide a discussion about possible ways to alleviate this problem,
%including puzzle verification outsourcing
%techniques~\cite{waters2004}, however, we leave it for future work.
%\improvement{should we be saying that here or do something else about it?}T

%--
\section{The Game-Theoretic Model} \label{s:framework}
In this section, we introduce our game-theoretic model to compute the puzzle difficulty levels that balance the clients' computational load as well as the server's ability to combat TCP state exhaustion attacks and minimize its time to verify puzzle solutions. We first present the threat model and assumptions that we make in our research and then turn to discussing our game-theoretic model.

%-
\subsection{Assumptions and threat model} \label{s:threat}
% -- include our threat model and assumptions 
% -- make sure to update this at the end in light 
% -- of any changes!

In this paper, we make the following assumptions.

%\paragraph*{Assumption 1} 
{\em Assumption 1.}
Common state exhaustion attacks, specifically TCP connection floods as well as higher-layer attacks, require the
presence of a two-way communication channel between the attacker bots and the victim server. 
This is evident from the nature of the state exhaustion attacks as well as their ability to circumvent scrubbing
and filtering techniques by sending lower volumes of traffic~\cite{anreport2018}. This implies that during a
single-vector state exhaustion attack, the victim server is able to receive packets from and send packets to 
its legitimate users as well as the attackers' machines. In the presence of a multi-vectored attack, we assume
the presence of volumetric attack mitigation techniques (such as cloud-based protection-as-a-service); client
puzzles will complement those techniques to provide layered DDoS defenses to hybrid attacks. 

%We assume that the attacker does not
%saturate the server's bandwidth and is only focused on consuming the server's computational resources.
%In other words, we assume that 
%despite the presence of a computational-resource exhaustion DoS attack, the server is still
%capable of sending and receiving messages to and from its clients. 
%The client puzzles mechanism will not be able to protect the server in the face of {\em volumetric attacks} where an attacker is capable of generating enough traffic
%to saturate the server's bandwidth. However, our mechanism can be coupled with other solutions, such as capabilities~\cite{Xiaowei2008} or georeplication, to provide complete protection
%against different types of DDoS attacks.
%% \suggestion{Do we want to revisit this in the security analysis section?}

%\paragraph*{Assumption 2} 
{\em Assumption 2.}
We assume that the attackers can control a large number of zombie machines that form botnets to
coordinate large-scale attacks aimed to deplete a target server's resources. 
%Additionally, we assume that the
%attackers may employ obfuscation techniques such as IP spoofing to avoid detection and blacklisting by traditional
%firewalls. 
However, we assume that that attacker's army of bots comprises commodity machines (e.g., workstations, mobile
phones, and IoT devices) but not servers or clusters with large computing resources. Such machines, being part of enterprise
solutions, are harder to compromise than commodity machines as they would employ better protective mechanisms.
We further assume that the attackers can capture and replay packets, but are not able to change their content. 
Protection against packet integrity attacks is beyond the scope of this paper.

%\paragraph*{Assumption 3} We assume that the attackers can sniff all packets flowing between the server and
%its clients, but do not have the ability to modify such packets. Therefore the attackers can perform replay packets and can observe all outcomes of the cryptographic puzzles generated by the server and the clients.
%Protection against packet integrity attacks is beyond the scope of this paper.

The above assumptions are similar to the ones made in~\cite{juels1999} and~\cite{wang2003}.
%, and as stated in~\cite{wang2003}, assumption 3
%hinders the effectiveness of TCP SYN cookies. 
Moreover, client puzzles do not require the end-server to
differentiate between the malicious and benign traffics. In fact, the low volume nature of state exhaustion attacks
and the requirement for quick and effective protective mechanisms can impede the accuracy of such detection mechanisms. 
Client puzzles, on the other hand, can provide timely protection against state exhaustion attacks as long 
as the benign clients are willing to
invest computing resources to receive service and thwart the attack.
%Additionally, the attackers' ability to perform IP spoofing and forgery (assumption 2) can deceive detectors and
%classifiers into labeling benign users as malicious, thus creating another layer of denial of service.
%% \suggestion{Is this too hard on detection methods, should we say that they can go hand in hand?}

%-
\subsection{Difficulty Selection as a Stackelberg Game}
We formalize the problem of selecting the puzzle difficulty similarly to a network pricing problem~\cite{basar2002, shen2007_2, shen2007}. 
We model the problem as a Stackelberg game between the service provider and the service users. The service provider is the leader and is 
responsible for setting the difficulty of the puzzles that the clients must solve to receive service. The users are the followers who then 
choose their request rates to optimize their local utility. 

Our model rests on the assumption that all clients are selfish agents seeking to optimize their local utilities; we do
not specifically posit a model for malicious bots. This assumption is rooted in the following observations. 
First, before the attack starts, the server does not have means to distinguish between benign clients and malicious bots. 
Second, TCP by default treats every connection request it receives as a benign request, and thus sends an ACK packet back
without checking whether the request is sourced from a benign user or a compromised bot. Third, 
positing a specific attacker model would require the estimation of attacker preferences and utilities, to which the
server has no means of measuring. This could create a schism between the model and its application in the real world. We
therefore treat every request as if it is coming from a benign client, and capture the presence of a large botnet by
obtaining the asymptotic solution for our model. 

%We assume that the clients are selfish agents, i.e., each client seeks to optimize her utility regardless of the other
%users. Why?

% - Begin formalization here - %
Let $x_i$ be user $i$'s request rate, for $i \in \set{1, 2, \ldots, N}$, where $N$ is the total number of users in the
system. Consequently, $x_{-i} = \sum_{j\neq i}^{N} x_j$ is the total request rate of all the other users. 
Our model captures the puzzle's difficulty using the expected number of hash operations needed to find and verify its
solution. Let $p_i$ be user $i$'s puzzle, $\ell(p_i)$ is then the expected number of hash operations that user
$i$ has to perform to find solution to $p_i$, and $S(\bar{x} = \sum_i x_i)$ be the expected service time for a user's
request.  User $i$'s utility can then be written as 
\begin{equation}
  u_i \parans{x_i, x_{-i}, p_i} = w_i \log(1+x_i) - \ell\parans{p_i} x_i - S(\bar{x})
  \label{eq:user_utility}
\end{equation}
$w_i$ is a user specific parameter that models the users valuation of the provider's service. In other words, $w_i$
represents the amount of work user $i$ is willing to pay per request. $\log \parans{1+x_i}$ represents the user's
expected benefit when making decisions under risk or uncertainty~\cite{sheng1984,basar2002}. The utility function can
be interpreted as the difference between the user's expected benefit and the amount of work she has to put to solve
a puzzle per request added to the expected service delay she incurs. 
Each user, being a rational and selfish agent, will choose a request rate that optimizes her local utility. This will
lead to the users adopting the Nash Equilibrium (henceforth referred to as equilibrium) rates $x^*_i$ for $i \in \set{1, 2,\ldots,N}$ such that 
\begin{equation}
  u_i \parans{x^*_i, x^*_{-i}, p_i} \geq u_i \parans{x_i, x^*_{-i}, p_i}, \; \forall x_i > 0, \forall i 
  \label{eq:user_ne}
\end{equation}

% - The provider's problem - %
The service provider's problem is to find an appropriate puzzle difficulty such that (1) it can effectively reduce the
impact of state exhaustion attacks and (2) minimize the amount of work the server does to generate and verify puzzles. 
Let $\mathcal{P}$ be the space of all possible cryptographic puzzles and $g(p_i)$ and $d(p_i)$ be the expected numbers of hash
operations that the provider needs to perform to generate and verify a solution to puzzle $p_i$, respectively. 
We model the provider's problem as finding the
set of puzzles $\mathbf{p^*} = \set{p^*_i \in \mathcal{P}, \; i \in \set{1,2,\ldots,N}}$ such that 
\begin{equation}
  \mathbf{p^*} = \underset{\mathbf{p} \in \mathcal{P}^N}{\argmax} \sum_{i=1}^N \parans{\ell(p_i) - \parans{g(p_i) +
  d(p_i)}}x^*_i
  \label{eq:provider_utility}
\end{equation}
Equation~\eqref{eq:provider_utility} captures the provider's goal of maximizing the amount of work that the clients have
to perform to obtain service under attack while minimizing the amount of work it must perform to generate puzzles and verify solutions. 
This formulation, in fact, captures the trade-off between the puzzle's complexity and the expected work
that the provider needs to perform to generate and verify puzzles. 
The tuple $\parans{\mathbf{x^*}:=<x^*_1, x^*_2, \ldots, x^*_N>, \mathbf{p^*}:=<p^*_1, p^*_2, \ldots, p^*_N>}$ represents
the solution to the full Stackelberg game. 

We find the solution 
%to this formulation 
by first fixing $\mathbf{p}$ and finding the client's equilibrium request 
rates $\mathbf{x^*}(\mathbf{p})$. If such a solution exists, we can then solve for the optimal puzzles $\mathbf{p^*}$ by
using $\mathbf{x^*}(\mathbf{p})$ in Equation~\eqref{eq:provider_utility}. 
%The quantities $(\mathbf{p}^*, \mathbf{x^*}
%(\mathbf{p^*}))$ represent the {\em sub-game perfect equilibrium}. 

%--
\section{Application to The Juels Puzzle scheme} \label{s:application}
% - A couple of things we need to mention:
% -- 1. The M/M/1 queue assumption 
% -- 2. The fixed puzzle for all 
% -- 3. The details of the puzzle 
% -- 4. The actual formulas
% -- 5. The solution to the problem
% --    5.a. Analytical solution 
% --    5.b. Numerical solution 
We now show how the framework we introduced in Section~\ref{s:framework} can be applied to the puzzles protocol
presented in~\cite{juels1999}. We first describe the puzzles protocol 
from~\cite{juels1999} and then show the solutions we obtain using our framework. 
For our modeling and analysis, we assume that the server issues puzzles with the same 
difficulty for all of its clients, i.e., $\ell(p_i) = \ell(p_j) \;
\forall{i,j} \in \set{1,2,\ldots, N}$. This assumption ensures a stateless protocol and is following the IETF TLS 
puzzles draft~\cite{ietf-draft} and is recommended in previous research~\cite{juels1999}.

\begin{figure}[tb]
  \centering
%  \resizebox{1.\columnwidth}{!}{%
    \includegraphics[width=0.85\columnwidth]{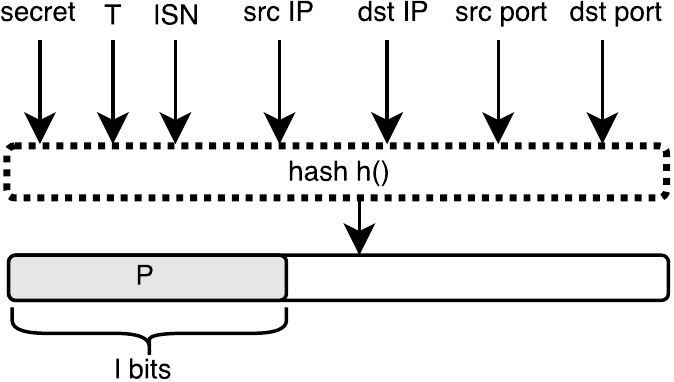}
%  }
  \caption{Puzzle construction~\cite{juels1999}}
  \label{fig:subpuzzle}
\end{figure}

% - The puzzle protocol description - %
A puzzle in this scheme is a bitstring of length $l$ bits having $m < l$ bits of difficulty. 
Figure~\ref{fig:subpuzzle} illustrates the construction of a challenge $P$.
The puzzle issuing server starts by
creating the hash $y = h\parans{s, T, \text{packet-level data}}$, where $s$ is a secret key; $T$ is a timestamp; packet-level data is a concatenation of the TCP Initial Sequence Number (ISN),
the source and destination IP addresses, and ports; and $h$ is a collision-resistant hash function. The server challenges a client to provide $k$ solutions to a puzzle $P$ formed by the first $l$ bits of $y$.

%Figure~\ref{fig:subpuzzle} illustrates the construction of a sub-puzzle $P[j]$ for $1 \leq j \leq k$. The puzzle issuing
%server first creates a hash $pre[j] = h\parans{s, T, \mbox{packet-level data}}$ of a secret key $s$, a timestamp $T$,
%and some packet-level data 
%(such as a timestamp, IP addresses, port numbers, $\ldots$), where $h$ is a collision-resistant hash function.
%It then creates a second hash of the output of the first hash function $im[j] = h(pre[j])$. The server will then mask off
%the first $m$ bits of $pre[j]$ and send the client a packet containing the timestamp $T$ and a puzzle $P$ composed of 
%$k$ sub-puzzles of the form 
%$$P[j] = \left< 0^m ||\; pre[j]<m+1,l> || \; im[j]<1,l> \right>$$
%where $0^m$ is a $0$ bit-string of length $m$, $||$ represents bit
%concatenation, $l$ is the bit length of the output of $h$, $m \leq l$, and $1 \leq j \leq k$. 

Upon receiving $P$, the client computes, by brute force, $k$ solutions 
$\set{s_1, \ldots, s_k}$ such that for $1 \leq i \leq k$, $|s_i| = l$ and the 
first $m$ bits of $h(P ~||~i~||~s_i)$ match the first $m$ bits of $P$, where 
$h$ is the same hash function that the server used -- $||$ denotes bit string concatenation. The client then sends
the solutions back to the server that in turn verifies their validity and 
subsequently accepts the request.
%subsequently grants the client
%connectivity.

%for the first $m$ bits of $pre[j]$ of each sub-puzzle $P[j]$. 
%Once she finds a solution $r[j]$ for each sub-puzzle $P[j]$, the client then sends $T$ along with her solutions to
%the server, which verifies that in each sub-puzzle $r[j] == h(s,T,\mbox{packet-level data})<1,m>$.   

% - The actual solution - %
\subsection{The Solution}
Since obtaining a single solution of length $m$ bits is best done by brute force, it requires a maximum of
$2^m$ and an average of $2^{m-1}$ hashing operations.  Since each puzzle requires $k$ solutions, solving a puzzle then requires an average of $k\times 2^{m-1}$ hashing operations. 
Therefore, for each user $i \in \set{1,2,\ldots, N}$, $\ell(p_i) = k\times 2^{m-1}$.

To capture the expected service time for the users, called $S(\bar{x})$, we abstract the server's operation by an $M/M/1$ queue with a service rate $\mu$. We argue that this abstraction is enough for our purpose since the attacks we are interested in target the TCP stack and are independent of the application that the server is running; they are only affected by the application's 
ability to remove established connections from the accept queue. The service rate $\mu$ can be obtained by running stress tests on the application provider's infrastructure and can capture different service optimizations such as replications and caching. 
Subsequently, we express the expected service time as $S(\bar{x}) = \frac{1}{\mu - \bar{x}}$, when $\bar{x} < \mu$.
This condition assumes that the server is well-provisioned to handle the users' load under regular conditions.

Therefore we rewrite Equation~\eqref{eq:user_utility} as 
\begin{equation}
  u_i \parans{x_i, x_{-i}, p_i} = w_i \log(1+x_i) - k\times 2^{m-1} x_i - \frac{1}{\mu - \bar{x}}
  \label{eq:app_user_utility}
\end{equation}

We now turn to the provider's formulation. We represent the space of all possible puzzles as the set of tuples $(k,m)$
where $k \in \mathbb{N}$ is the number of solutions requested and $m \in \mathbb{N}$ is the number of bits of
difficulty in each. 
Therefore we write $\mathcal{P} = \set{\parans{k,m}, k,m \in \mathbb{N}}$.
As previously discussed, every challenge can be generated using only one 
hash operation, therefore we write $g(p_i) = 1, ~ \forall{i}$.  
%As stated in~\cite{juels1999}, a puzzle with $k$ sub-puzzles can be generated
%using only 1 hashing operation using a more optimized variant of the initial generation algorithms. Therefore, $g(p_i) =
%1, ~ \forall{i}$. 

When the server receives a solution, it generates a hash from the received packet's header and then verifies each of the
$k$ solutions until it finds a violating one or deems the puzzle correctly solved. If the server chooses which of the
$k$ solutions to verify uniformly at random, it then needs an average of 
$\frac{k}{2}$ hashing operations. Therefore, we can write $d(p_i) = 1 + \frac{k}{2},~\forall{i}$. 

Since we assume that the service provider issues puzzles with the same difficulty for all users, we henceforth write $p
= (k,m) = p_i, \forall{i}$. We can then rewrite Equation~\eqref{eq:provider_utility} as 
\begin{equation}
  p^* = \underset{p \in \mathcal{P}}{\argmax} \sum_{i=1}^N \parans{k\times 2^{m-1} - 2 -
  \frac{k}{2}}x^*_i(p)
  \label{eq:app_provider_utility}
\end{equation}
Let $w_{av}$ be the average client valuation of the server's service and $\alpha$ be the server's asymptotic service rate per user, under normal operation.
\begin{thm} \label{thm:nash}
  The Nash equilibrium is achieved at $p^* = (k^*, m^*)$ such that:
    \begin{equation} \label{eq:theorem_diff}
      \ell(p^*) = k^*\times 2^{m^*-1} = \frac{w_{av}}{(\alpha + 1)}
    \end{equation}
\end{thm}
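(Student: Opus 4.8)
The plan is to solve the Stackelberg game by backward induction, following the two-stage recipe stated after \eqref{eq:provider_utility}: first fix an arbitrary puzzle $p=(k,m)$ and compute the followers' equilibrium $\mathbf{x}^*(p)$, then feed this response into the leader's objective \eqref{eq:app_provider_utility} and optimize over $p$. The crucial simplification I expect to exploit is that the puzzle enters the client's utility \eqref{eq:app_user_utility} only through the scalar $\ell(p)=k\,2^{m-1}$, so the entire followers' stage can be parametrized by $\ell$ alone.

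For the followers' stage I would write the stationarity condition $\partial u_i/\partial x_i=0$ for \eqref{eq:app_user_utility}. Since $\bar x=\sum_j x_j$ and $\partial\bar x/\partial x_i=1$, this gives
\[
\frac{w_i}{1+x_i}-\ell(p)-\frac{1}{\parans{\mu-\bar x}^2}=0 .
\]
Before calling this an equilibrium I would check the second-order condition: $\partial^2 u_i/\partial x_i^2=-w_i/(1+x_i)^2-2/(\mu-\bar x)^3<0$ whenever $\bar x<\mu$, so $u_i$ is strictly concave in $x_i$ and the stationary point is the unique best response. Solving yields $1+x_i^*=\frac{w_i}{\ell(p)+(\mu-\bar x^*)^{-2}}$.

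I would then aggregate. Summing the last relation over $i=1,\dots,N$, writing $w_{av}=\frac1N\sum_i w_i$ and introducing the per-user rate $\gamma=\bar x^*/N$, collapses the $N$ coupled conditions into the single self-consistent identity
\[
\ell(p)+\frac{1}{\parans{\mu-\bar x^*}^2}=\frac{w_{av}}{1+\gamma} .
\]
This already has the shape of the target \eqref{eq:theorem_diff}. To finish, I would invoke the large-botnet, normal-operation regime the statement refers to: when the server is well provisioned ($\bar x^*$ bounded away from $\mu$ while $N$ grows), the marginal service-delay term $(\mu-\bar x^*)^{-2}$ is negligible next to the puzzle cost $\ell(p)$, and the induced per-user rate $\gamma$ tends to the asymptotic service-rate-per-user $\alpha$. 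Dropping the delay term and substituting $\gamma\to\alpha$ gives $\ell(p^*)=w_{av}/(1+\alpha)$, which is exactly \eqref{eq:theorem_diff}. I would close by noting that the leader's objective \eqref{eq:app_provider_utility} is maximized precisely at the difficulty that drives utilization to this sustainable per-user rate, and that any integer pair $(k^*,m^*)$ realizing the product $k^*2^{m^*-1}=w_{av}/(1+\alpha)$ is an equilibrium, since the clients respond only to the product.

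The step I expect to be the main obstacle is the asymptotic reduction in the last paragraph, for two reasons. First, $\alpha$ enters the conclusion self-referentially --- it is the per-user rate \emph{induced} by the optimal difficulty --- so I must argue that the self-consistent system has a solution and that treating $\alpha$ as an externally measured ``normal-operation'' parameter is legitimate. Second, I must justify rigorously that the quadratic service-delay marginal and the lower-order generation and verification constants $g(p),d(p)$ appearing in \eqref{eq:app_provider_utility} are genuinely sub-dominant in the limit, so that they do not perturb the leading-order identity; controlling the monotonicity of the aggregate map $\bar x^*(\ell)$, to guarantee the followers' equilibrium is unique rather than merely stationary, is the technical lever I would rely on here.
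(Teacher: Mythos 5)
Your followers'-stage analysis is essentially the paper's: the stationarity condition, the per-player concavity check, and the aggregation to $\ell(p)+(\mu-\bar x^*)^{-2}=w_{av}/(1+\gamma)$ with $\gamma=\bar x^*/N$ all match. (The paper additionally routes this through a strategically equivalent potential function $H=\sum_i w_i\log(1+x_i)-\ell(p)\bar x-(\mu-\bar x)^{-1}$, whose strict concavity gives uniqueness of the Nash point itself, not merely of each best response; your ``monotonicity of $\bar x^*(\ell)$'' is left as a promise.) The genuine gap is the leader's stage. The identity you derive holds for \emph{every} admissible difficulty $\ell$; the entire content of the theorem is that the \emph{optimal} $\ell$ induces $\gamma\to\alpha$, and you assert this (``the leader's objective is maximized precisely at the difficulty that drives utilization to this sustainable per-user rate'') rather than prove it. Your framing is also internally inconsistent: you invoke a regime in which ``$\bar x^*$ is bounded away from $\mu$,'' yet the substitution $\gamma\to\alpha$ means precisely $\bar x^*/\mu\to1$. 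The paper closes this gap by using the followers' first-order condition as a change of variables so that the leader maximizes $G(\bar y)=\bigl(\tfrac{\bar w}{\bar y}-\tfrac{1}{(\mu+N-\bar y)^2}\bigr)(\bar y-N)$ over $\bar y\in(N,N+\mu)$; strict concavity plus the blow-up at the right endpoint give a unique interior maximizer, and the asymptotic analysis of $G'(\bar y^*)=0$ yields $(\alpha-x_{av}(N))^3N^2\to\mathrm{const}$, which simultaneously proves $x_{av}\to\alpha$ and shows that the service-delay term contributes only an $O(N^{-2/3})$ correction to $\ell^*$. That quantitative estimate is what licenses ``dropping the delay term,'' which you do by fiat.

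Two smaller omissions. First, the leader's true payoff is $\bigl(\ell(p)-2-\tfrac{k}{2}\bigr)\bar x$, not $\ell(p)\bar x$; the paper needs a separate lemma showing that optimizing the truncated objective loses only an additive constant, and the $\tfrac{k}{2}$ verification cost is also what breaks the degeneracy among pairs $(k,m)$ with the same product, so your claim that any integer pair realizing $k^*2^{m^*-1}=w_{av}/(\alpha+1)$ is equally good is not quite right. Second, the followers' equilibrium you feed into the leader's problem exists only when $k\times2^{m-1}<\tfrac{\bar w}{N}-\tfrac{1}{\mu^2}$ and all rates are interior ($x_i^*>0$); these admissibility conditions must be verified for the candidate optimum.
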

\begin{proof}
We show the proof of Theorem~\ref{thm:nash} in Appendix~\ref{app:proof}.
\end{proof}

\subsection{Analysis}
The equilibrium difficulty we obtained in Theorem~\ref{thm:nash} illustrates an important design tradeoff between the server's
provisioning and the difficulty of the puzzles that the clients should solve when the server is under attack. A well-provisioned server, 
i.e., one for which $\alpha > 1$, will be able to absorb a larger fraction of the attack and subsequently asks its clients to solve less complex challenges. In that case, the clients help the server tolerate the attack and commit fewer resources than they are willing to (the average number of hashes they would need
to perform to solve a challenge is less than $w_{av}$) --- the client achieves high utility.
On the contrary, a server that is not able to handle all of its clients' regular load, i.e., one for which 
$\alpha < 1$, would require its clients to solve harder puzzles ($p^* \simeq w_{av}$)  and thus achieve lower utility levels. Therefore, to tolerate an attack, the server asks its clients to commit more resources risking more clients dropping out as the intensity of the attack increases. Those clients with $w_i < w_{av}$ would consider it more beneficial for them to drop out since it would be too costly as a function of the resources committed to obtaining a connection.

We further note that our model and solution are agnostic to the application that is run by the server as well as the specific server configuration.
This, in fact, is consistent with TCP being a transport layer protocol that is independent of the type 
of application running on top of it. 
All our model requires is an estimate of the server's capacity to handle large loads (i.e., the parameter $\alpha$) which can be obtained by running appropriate
stress tests. Server replication and load balancing are then captured in our model through an increase in the value of $\alpha$ (given the same load).

%% handling long lived connections and HTTP 1.1
Finally, we note that our result is not affected by the presence of long-lived TCP connections (for example, if using HTTP/1.1~\cite{rfc7230http}). The puzzles protect the TCP connection 
establishment channel and allows users to connect to the server in the presence of malicious attacks. The lifetime of the established connection is not affected by the presence of puzzles or lack thereof; 
in the case of HTTP/1.1, the goal of the challenges is to allow clients to establish the TCP connection upon which the HTTP session persists. 
Moreover, the solution we present in Theorem~\ref{thm:nash} captures $p^*$ in terms of expected number of hash operations that a client needs to perform {\em per attempted connection}, while 
$w_{av}$ represents the average client valuation of the requested service {\em per request}.  For an HTTP/1.1 persistent session, the client would only need to pay $p^*$ hashes once. 

%\subsection{Obtaining parameters and computing $k$ and $m$}
\subsection{Obtaining model parameters}
\begin{figure}
  \centering
  \resizebox{1.0\columnwidth}{!}{%
    \includegraphics{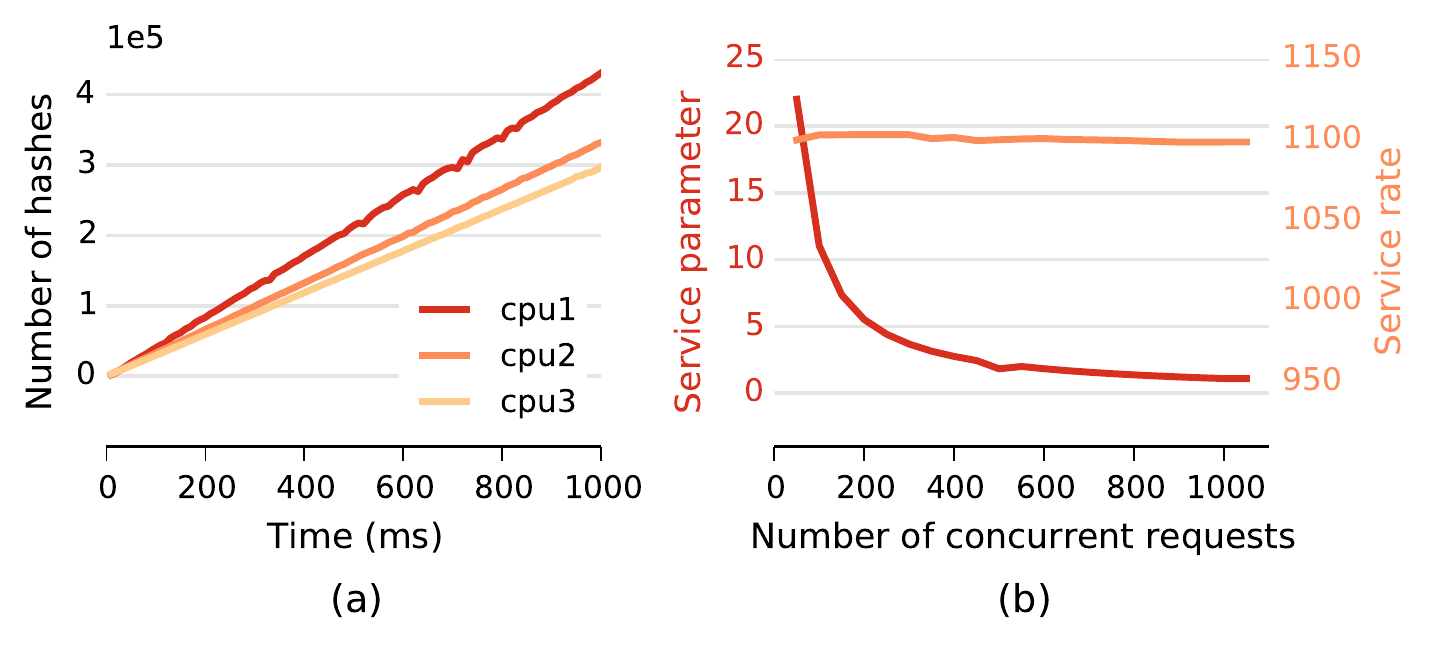}
  }
  \caption{Performance profiles of (a) client ($w_{av}$) and (b) server ($\alpha$).}
  \label{fig:nash_exper}
\end{figure}
The model parameters, $w_{av}$ and $\alpha$, relate to the performance capabilities of the server and the clients. We provide an experimental procedure to obtain the model parameters. Then we apply the procedure to an experimental setup to show the Nash strategy.

First, $w_{av}$ is the number of hashes we assume the client is willing to perform to complete the TCP handshake. It represents the level of acceptable service degradation as each TCP connection will take longer to finish. To find $w_{av}$, we assume that 400 ms is adequate time to establish a TCP three-way handshake for a legitimate client when the server is under attack. Usability studies show that a 400 ms delay does not interrupt the user's flow of thoughts~\cite{book:usability}. Using this assumption, we find the number of hashes a machine can perform for 400 ms by profiling the machines. $w_{av}$ is the average value obtained during the experiments.

Second, $\alpha$ is the service parameter of the server. It is directly related to the processing power of the server. To obtain the parameter, we start by stress testing a server. The stress test varies the rate of requests per second and records the time it takes to get service for each rate. We compute $\alpha$ as the ratio of service rate over the number of concurrent requests.

Finally, after obtaining $w_{av}$ and $\alpha$, we calculate the equilibrium difficulty parameters $(k^*,m^*)$ using Equation~\eqref{eq:theorem_diff}. 
The choice of those parameters exposes a tradeoff between the number of hashes the server needs to verify a solution and the probability that an adversary can guess a solution. Choosing a very small $k$ will increase the attacker's ability to guess a solution, and selecting a large $k$ will increase the solution verification time. On the other hand, by selecting lower values of $k$, the challenge difficulty $m$ would increase allowing the server to offset its lack of computational resources onto its clients by asking them to solve harder challenges.
%increasing $m$ incurs a burden on the behalf of the clients while %requiring no further effort from the server.

\subsection{Example}
In the following, we present an example for computing the Nash equilibrium difficulty for a server serving a variety of machines with varying processing powers. Starting with the client, we obtain $w_{av}$ by profiling the number of SHA-256 operations per second.  Figure~\ref{fig:nash_exper} (a) shows the profile of three CPU types:
(1) \texttt{cpu1} is an Intel Xeon E3-1260L quad-core processor running at 2.4 Ghz, 
(2) \texttt{cpu2} is an Intel Xeon X3210 quad-core processor running at 2.13 Ghz, and
(3) \texttt{cpu3} is an Intel Xeon processor running at 3Ghz. 
% \begin{itemize}
%     \item \texttt{cpu1} is an Intel Xeon E3-1260L quad-core processor running at 2.4 Ghz, 
%     \item \texttt{cpu2} is an Intel Xeon X3210 quad-core processor running at 2.13 Ghz, and
%     \item \texttt{cpu3} is an Intel Xeon processor running at 3Ghz. 
% \end{itemize}

The average number of hashes that can be performed over the three types of CPUs is $w_{av} = 140630$. Although the CPUs we profiled are not an exhaustive representative set of the processing powers of a typical clientele,  in all of our experiments, we leveled the play-field by providing all the attackers with similar or better computational powers.  In other words, if a client can perform a three-way handshake in 400 ms,
%when under attack, 
a typical attacker can perform the same connection in a time equal to or less than 400 ms.  

Then we estimate the server's $\alpha$ parameter. We deployed an Apache2 web server on a dual Intel Xeon hexa-core processor running at 2.2 GHz with 24 GB of RAM.  We then used the apache benchmarking tool \texttt{ab}~\cite{abtool} to profile the performance of the server under regular and high loads. Figure~\ref{fig:nash_exper} (b) shows the service rate ($\mu$) and the service parameter $\alpha$ of our server as the number of  concurrent requests attempted by \texttt{ab} increases. Our server was able to maintain a constant service rate under high load ($\mu \simeq 1100$ requests/s), and thus the parameter $\alpha$ converged to a value of $1.1$ as the load increased.
%
%Characterizing the performance levels of the service provider and stress testing it under normal and heavy load can help the administrators in obtaining an estimate of their service parameter $\alpha$, and thus decide on the appropriate difficulty level to use  when facing a DDoS attack. 
%
%
Thus for our example, with $w_{av}=140630$ and $\alpha=1.1$, the TCP puzzle difficulty is set at the value $k^* = 2$ and the difficulty $m^* = 17$ bits from Equation~\eqref{eq:theorem_diff}. That is, each challenge requests two solutions with 17 bits of difficulty each.

\section{Implementation} \label{s:implementation}
% \subsection{Linux kernel}
We implemented the TCP challenges in the Linux kernel's TCP stack using the Linux 4.13.0 source. The puzzles are turned off by default and are only enabled when the socket's queue is full. We designed our implementation in a way such that the challenges take precedence over the SYN cookies once the queue is full; we do however support SYN cookies as a backup option.  We provided support for dynamically tuning the parameters of the challenges through the kernel's \texttt{sysctl} interface. Both $k$ and
$m$ can be adapted at any point during the server's runtime. 

We generate the challenge's pre-image by hashing a string containing: (1) a server's secret key, generated once at the start of a socket's lifetime, (2) the server's current timestamp, (3) the SYN packet's source and destination IP addresses, and (4) the SYN packet's source and destination port numbers. 
For the hashing function, we used the Linux kernel's SHA256 implementation since it provides the necessary pre-image resistance guarantees~\cite{juels1999}.
%required for our purposes~\cite{juels1999}. 
%(We also note that bitcoin uses two rounds of SHA256 for its proof-of-work scheme.)

\begin{figure}
    \centering
     \begin{bytefield}[bitwidth=0.80em]{32} 
      \bitheader{0-31} \\
\bitbox{8}{Opcode 0xfc} & \bitbox{8}{Length} & \bitbox{8}{$k$} &
\bitbox{8}{$m$} \\
        \bitbox{8}{$\ell$} & \bitbox{24}{Preimage $\cdots$} \\
        \bitbox{8}{$\cdots$ preimage} & \bitbox{24}{Padding (NOP)} 
     \end{bytefield}
     \caption{TCP Options block for a SYN challenge.}
     \label{fig:tcpoptchallenge}
\end{figure}

\begin{figure}
    \centering
     \begin{bytefield}[bitwidth=0.80em]{32} 
      \bitheader{0-31} \\
\bitbox{8}{Opcode 0xfd} & \bitbox{8}{Length} & \bitbox{16}{MSS value} \\
        \bitbox{8}{Wscale} & \bitbox{24}{Solution $\cdots$} \\
        \bitbox{8}{$\cdots$ solution} & \bitbox{24}{Solution $\cdots$} \\
        \bitbox{8}{$\cdots$ solution} & \bitbox{24}{Padding (NOP)}
     \end{bytefield}
     \caption{TCP Options block for a SYN solution with $k=2$.}
     \label{fig:tcpoptsolution}
\end{figure}

In order not to break the TCP definition, we inject the challenges and solution into the options field of the TCP SYN-ACK and ACK packets. Figure~\ref{fig:tcpoptchallenge} shows the format of the TCP option we implemented to transmit a challenge in the SYN-ACK packet. We chose an unused opcode (0xfc) to represent a challenge option. The \texttt{Length} field indicates the length of each option block in bytes, including the opcode and the field itself. We allocate one byte each for the number of solutions $k$, the difficulty of the puzzle $m$ (in bits), and the pre-image and solution length $l$. Next, we insert the challenge's pre-image. Finally, following the TCP stack requirement, each option block must be 32 bits aligned, we, therefore, insert 0 to 3 NOP fields to ensure alignment. 

Figure~\ref{fig:tcpoptsolution} shows the format of the TCP option used by a client to send a computed solution. Similar to the challenge option, we made use of the unallocated opcode (0xfd). Since the server keeps no state about the client after receiving the first SYN packet, the client safely assumes that the server has ignored its previously announced {\em Maximum Segment Size} (MSS)
and {\em Window Scaling} (Wscale) parameters. We then resend the MSS and Wscale values within the solution block and then write down each of the $k$ solutions and perform alignment to 32 bits. 

The benefits of adding the MSS and Wscale parameters to the solution option block are twofold. First, the challenge protocol would be self-contained; implementation of the TCP stack usually ignores all options other than timestamps in any packet other
than the SYN and SYN-ACK packets. Therefore supporting the challenge protocol does not require changes to legacy options parsing. The addition also provides us with the benefit of reducing the space needed to resend the options in the ACK packet. For example, sending the MSS values as a separate option would require
4 bytes while we only need 2 in the case of the self-contained solution option. Second, we encode the MSS value using 16 bits (as defined in the specification of TCP), instead of the 3 bits provided by SYN cookies. Additionally, when SYN cookies are in
operation, the client and the server can no longer agree on the window scaling parameters, which reduces the performance of the TCP connection.

% talk about the timestamps
Also, modern implementations of the TCP stack support the exchange of timestamps as options in the TCP header to improve the estimation of the connection's round-trip time. Our implementation makes use of the timestamps option, whenever available, to generate, solve, and verify a challenge.
However, in the case where the timestamps option is not put to use
(for example turned off by the client or the server), our implementation embeds the timestamp used in the generation of the challenge (an additional 4 bytes) in both the challenge and the solution option blocks.

%\begin{figure}[t]
% \centering
%% \resizebox{0.3\textwidth}{!}{%
%  \includegraphics[scale=0.85]{}
%% }    
% \caption{The server's behavior when the queue of established connections is
%full.}
% \label{fig:serverbehavior}
%\end{figure}

% talk about the server behavior when the accept queue is full 
Furthermore, when the server's accept queue overflows, its default behavior is to reject new connections, even if the protection mechanism is in place. However, for our purposes, since the goal of the puzzle protection mechanism is to throttle the rate of all clients (both benign and malicious), we modified the listening TCP socket's implementation to send a challenge when the protection is in effect, even if the accept queue overflows. When the server receives an ACK packet while under attack, it first checks if the queue is full and only performs the verification procedure when there is room to accept the connection. If the queue is full, the server will ignore the ACK packet. In such a case, the user (both benign and malicious) assumes that the connection has been established and will
begin sending application level packets thus causing the server to reply with a reset (RST) packet to signal that the connection was not established. 
This implementation choice achieves the goal of deceiving
the malicious users that they have established a connection while they have not; the malicious agents that do not send application-level packets will not receive a RST packet to indicate that the server has dropped the connection. 

Finally, to combat replay attacks, we make use of the timestamp in the solution to check if a challenge has expired. This stateless mechanism hinders an attacker's ability to replay solution packets since tampering with the timestamp will cause the solution verification to fail. 
%Note, if an attacker attempts to change the timestamp to a later date, the solution will not be valid as the challenge uses the original timestamp.
The timeout interval can be tuned through the kernel's \texttt{sysctl} interface. 
%We believe that it
%is good practice to set the timeout value to a multiple of the average time needed to generate a solution
%to the puzzle. 

%Finally, in order to prevent an attacker from replaying old solution packets and flooding the
%server with verification tasks, we implemented a timeout interval for each challenge packet
%that the listening socket sends. For each incoming solution, we first check whether the solution
%has not timed out and drop the packet if it has. This mechanism is stateless since each challenge
%incorporates the timestamp at which it was sent in the generation process. Thus, 
%if an attacker attempts to change the timestamp to a later date, the verification of the solution 
%will fail on the first attempt, thus limiting the server's hashing checks to a single one. 
%This timeout interval can be tuned through the kernel's \texttt{sysctl} interface. 
%We believe that it
%is good practice to set the timeout value to a multiple of the average time needed to generate a solution
%to the puzzle. 

% \subsection{ns-3 Simulator}
% \input{ns3implementation}

\section{Evaluation} \label{s:evaluation}
Using our modified Linux kernel, we evaluated the performance of the 
TCP puzzles in safeguarding a server TCP connection establishment channel
against both SYN and connection floods. The difficulty of the puzzles 
employed is the Nash solution that we established in Section~\ref{s:application}. 

%We implemented TCP puzzles into Linux's network stack, kernel version 4.13.0. Puzzles are disabled by default and are only enabled when the backlog queue is depleted. The difficulty of the puzzles sent to the users is the Nash equilibrium strategy. In the following we evaluate the performance of TCP puzzles in safeguarding a server against distributed SYN floods and connection floods.
 
We performed the experiments using DETER~\cite{deter}, a cybersecurity testbed based on Emulab provided by the University of Southern California. DETER allows for reproducible experiments that are described using {\em network simulator} (ns) scripts and an agent activation module. The testbed automatically deploys and executes the scenarios. 
%In the spirit of moving towards a Science of Security through reproducible experiments~\cite{herleysok}, we provide all of our experiment scripts, configuration files, activation scripts, kernel images and obtained datasets online at {\em URL redacted for double-blind review}.

The goals of our experiments are to evaluate (1) the effectiveness of TCP puzzles in the protection against state
exhaustion attacks, (2) the impact of TCP puzzles on service quality, and (3) the ability of the Nash equilibrium puzzle difficulty to balance the client solution and the server verification loads as well as the
ability to effectively rate limit malicious attackers. 
We perform the experiments using a test deployment of a single server providing service to a set of clients while being the target of a state exhaustion attack launched by a botnet of malicious machines. 
Specifically, our server runs an apache2 HTTP server with an application that accepts ``\texttt{gettext/size}" requests and returns messages containing \texttt{size} bytes of random text.
The clients, on the other hand, run an HTTP client that requests text from the server at a pre-specified rate. 
%attempts to deny service by running a DDoS attack by filling the backlog connection queue in the service.
 
While in a real-world deployment, service would be provided by a farm of servers, our scenario uses only one server and a smaller set of clients. In those larger systems, since a load balancer forwards TCP connection requests to the individual servers, an attack has to ensure its wave of requests reaches all of the servers to effectively deny service. Therefore, adding more servers allows a service provider to tolerate bigger attacks by large botnets. Our results show that, in essence, a server using TCP puzzles as means for state exhaustion DDoS protection can tolerate a larger botnet than an unprotected server. 
We hence argue that when all the servers in a farm employ our protection, the system will be able to tolerate a larger botnet that is proportional to the improved tolerance of a single server.  Our experiment scenario thus studies the protection offered to a single server; the results are to scale when more puzzles-equipped servers are deployed in a load balancing scheme.
 
%****
%While, we argue that a single server is suitable for our experiments; note that we reply on multiple attacker nodes as the distribution of computation resources is needed to accurately represent an attacker.
%****

%
We consider two types of attackers, the first uses randomized source IP addresses to target the server's \texttt{listen} queue with a flood of half-open TCP connections (using \texttt{hping3}). The second type uses real IP addresses to flood the server with established connections (using \texttt{nping}) in an attempt to fill its \texttt{accept} queue and prevent new, legitimate, connections from being established.
%While, each type of attack targets a different part of the TCP connection queue, both attempt to fill the queue such that new connections are dropped, rendering the server unavailable.
%The server connects to a router which connects to a set of hosts that constitute the benign clients and attack botnet.
Unless otherwise stated, we use the following experiment parameters. The set of clients contains 15 machines requesting
$10,000$ bytes of data at exponentially distributed time intervals, with rate $r_c=20$ requests per second. The botnet
consists of 10 machines running an attack at a constant rate $r_a=500$ requests per second, amounting to an overall attack rate of $5,000$ packets per second (pps). All of the malicious machines are equipped with a computational power equal to, or greater than, that of the clients' machines. All of the machines in our setup are equipped with our modified kernel, with the exception of Experiment 5 in which we study the impact of the puzzles when some of the machines do no deploy the patch. 

%All the hosts are connected via switched lans, and connect to the server via a router. All the links in our topology are provide maximum bandwidth of 100Mbps without delays; the  goal of attacker is not to saturate the links but to flood the connection queues.

Finally, with the exclusion of Experiment 6, all the experiments use the same network topology with well-provisioned link bandwidths so as to avoid link saturation. The backbone consists of three routers fully connected with 1 Gbps links. The server connects to the network with a 1 Gbps link while all the other hosts connect to the network with 100 Mbps links. 
All of our agents run on physical machines with Ubuntu 16.04 LTS along with our patched Linux 4.13.0 kernel. We provide more details about the hardware specs of the machines we used as well as the network topology (Figure~\ref{fig:topology}) in Appendix~\ref{app:topology}. 
%Figure~\ref{fig:topology} in Appendix~\ref{app:topology} shows the topology with some details.  Appendix~\ref{app:topology} shows the hardware specs of the machines we used.
%
%In our experiment scenarios we have control over the following variables:
%Number of hosts
%Ratio of attackers to benign hosts
%Client request rate
%Attack rate
%Network links and delays
%TCP puzzle difficulty (k,m)
%TCP Cookies
%TCP connection queue size at server
%Link bandwidth
%Link delays
%
We deploy the packet monitoring software, \texttt{tcpdump}, on all of the machines and use the captures to measure the throughput at the server, the throughput at each host, the TCP connection time, and the number of dropped TCP connections. We elect to report on the throughput since it represents a direct assessment of the impact of puzzles on our application, nevertheless we acknowledge that different applications will require different metrics. 
%In what follows, we describe the results of the experiments that we conducted using the test deployment.
%In our experiments we vary the ratio of attack machines to normal machines, we vary the rate of the DDoS attack and that of the normal machines. Moreover, we vary the difficulty of the puzzles and the type of puzzles sent. We compare the availability and connection time for a server without defenses, a server with TCP SYN cookies enabled, and a server running TCP puzzles.
 
\subsection{Experiment 1: Impact of puzzles on client performance}
In the first experiment we show that the puzzles' impact on the connection time can be controlled by setting the
parameters $k$ and $m$. We study the impact of TCP puzzles by varying the number of solution required per challenge, $k$, 
over the set $\{1,2,3,4\}$ and the number of difficulty bits per solution, $m$, over the set $\{4,10,16,20\}$.

\begin{figure}[t]
\center
	\includegraphics[width=0.9\columnwidth]{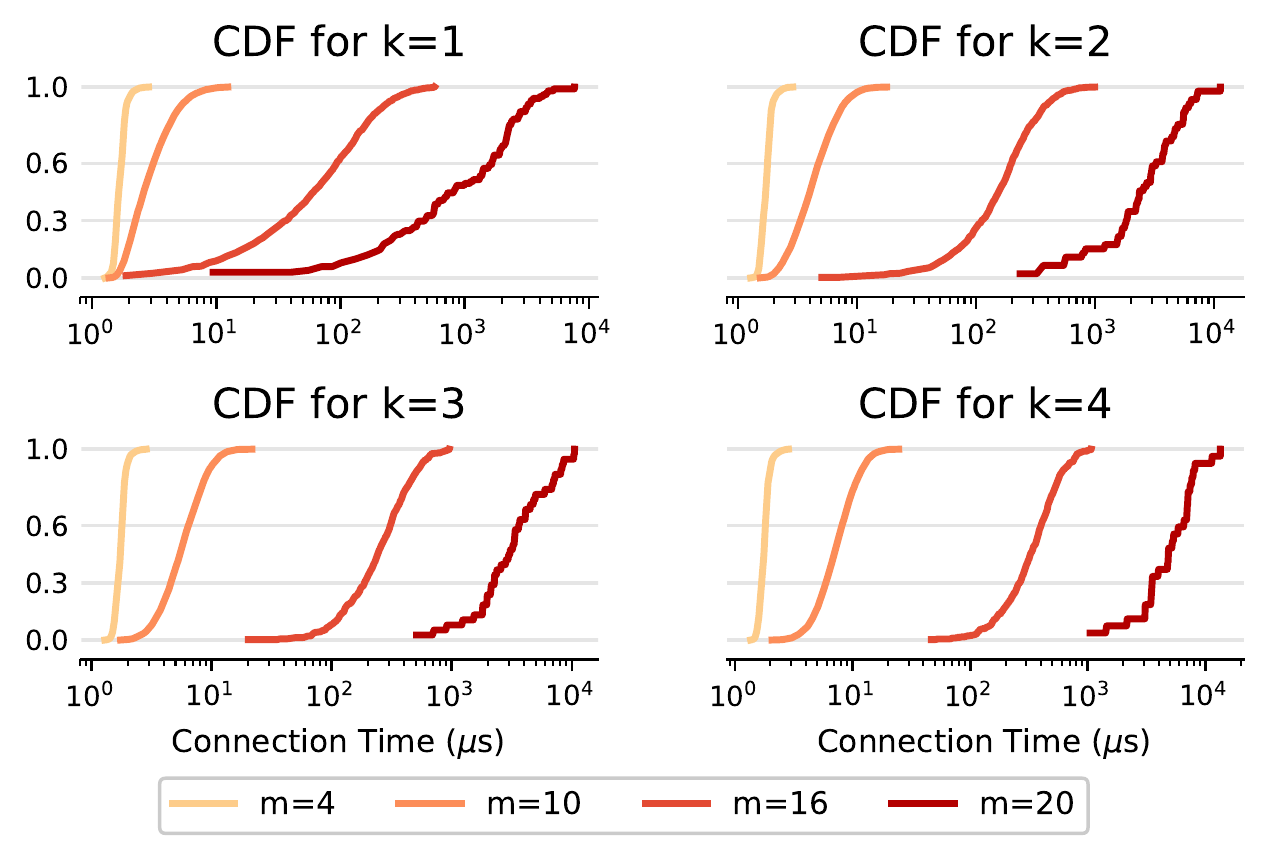}
	\caption{CDF of the connection time}
	\label{fig:exp1:cdf}
\end{figure} 

Figure~\ref{fig:exp1:cdf} shows the cumulative density function (CDF) of the connection time of a client node as the parameters $k$ and $m$ are varied. 
We first note that any increase in either parameter causes the connection times to increase. However, each parameter
affects the rate of change of connection time with a different magnitude. More to the point, increasing the number of
difficulty bits increases the connection time with an exponential factor. For example, when $k=1$, the average
connection time for $m=4$ is 2.0 $\mu$s while it is 286 $\mu$s for $m=16$. On the other hand, changing the number of
puzzles increases the connection time with a constant factor, in $m=16$ the average connection time for $k=1$ is 286
$\mu$s while the average connection time for $k=4$ is 558 $\mu$s. By tuning both variables of the puzzle difficulty, the
defender has a fine-grained control over the connection time for a host and thus its ability to perform state exhaustion
attacks.

\subsection{Experiment 2: SYN and connection flood protection} \label{s:eval:main}
%{\bf missing impact on CPU of the nodes}
In the second set of experiments, we show that a server running TCP puzzles is able to tolerate a SYN flood and a connection flood. We also show that TCP cookies do not offer protection during a connection flood.

\begin{figure}[t]
\center
	\includegraphics[width=1\columnwidth]{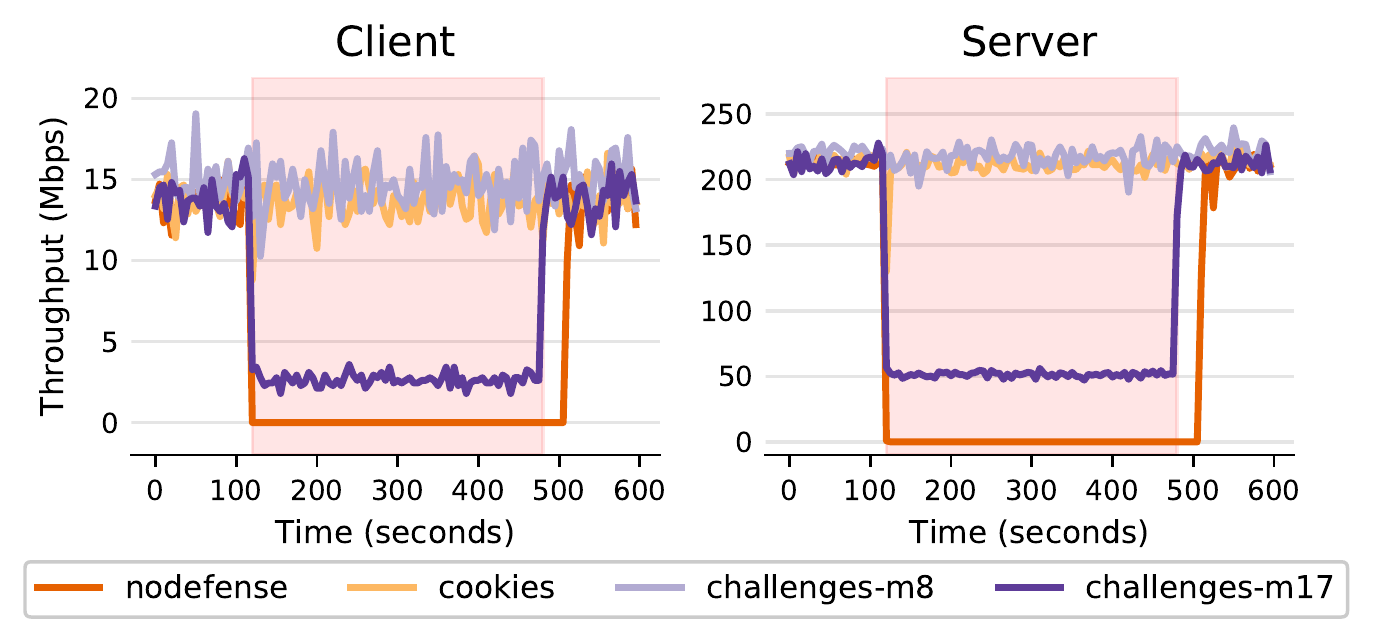}
	\caption{Throughput at a client and server during SYN flood.}
	\label{fig:exp2:throughput}
\end{figure} 

\begin{figure}[t]
\center
	\includegraphics[width=1\columnwidth]{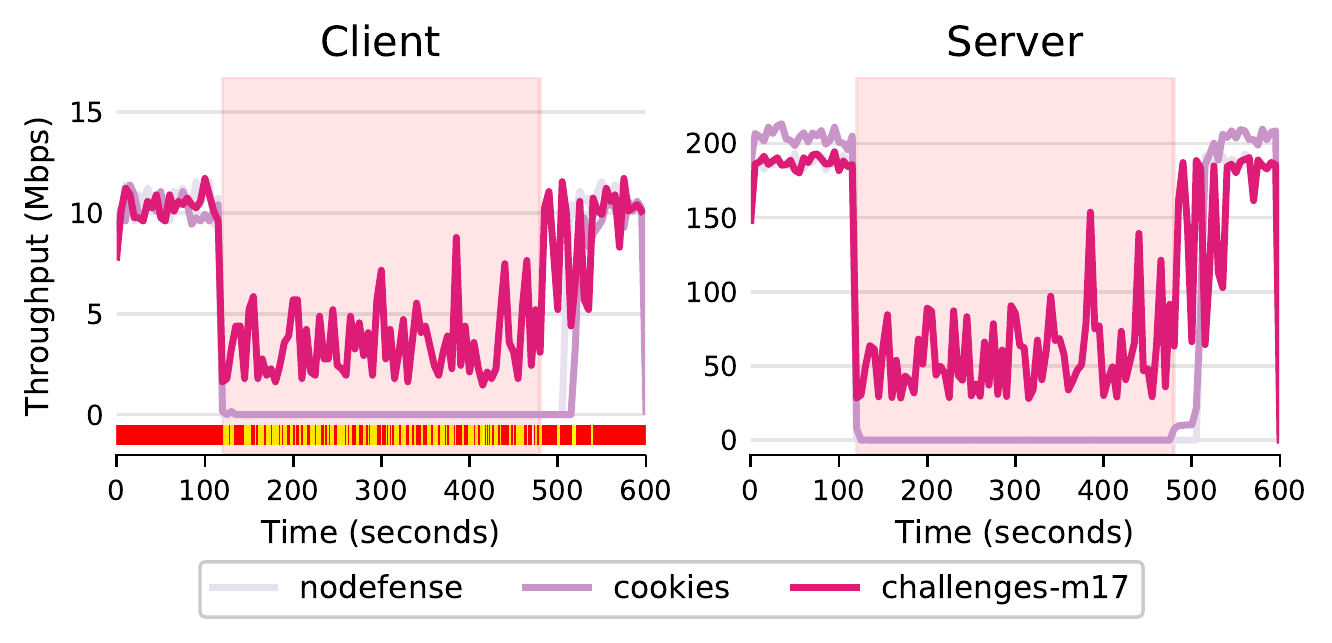}
	\caption{Throughput at a client and server during a connection flood.}
	\label{fig:exp2:2:throughput}
\end{figure} 

\begin{figure}[t]
  \centering
  \includegraphics[width=1\columnwidth]{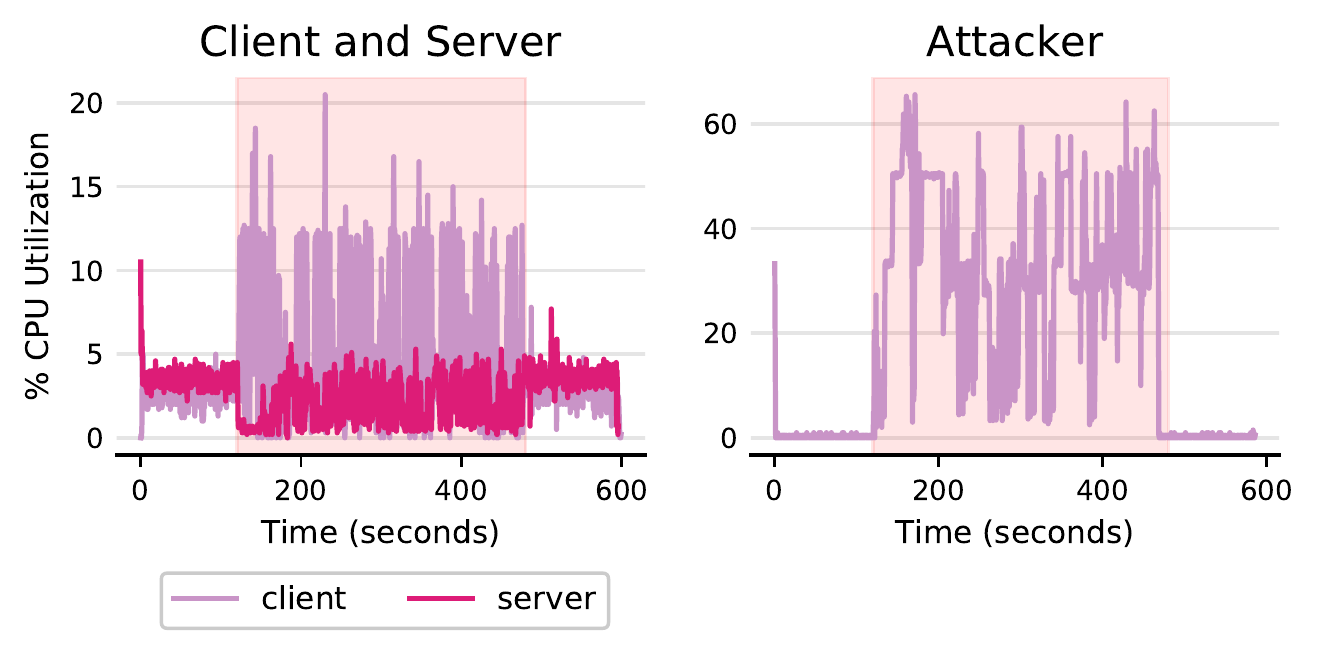}
  \caption{Impact of client puzzles on the cpu utilization during a connection flood attack.}
  \label{fig:exp2:cpu}
\end{figure}

\begin{figure}[t]
  \centering
  \includegraphics[width=1\columnwidth]{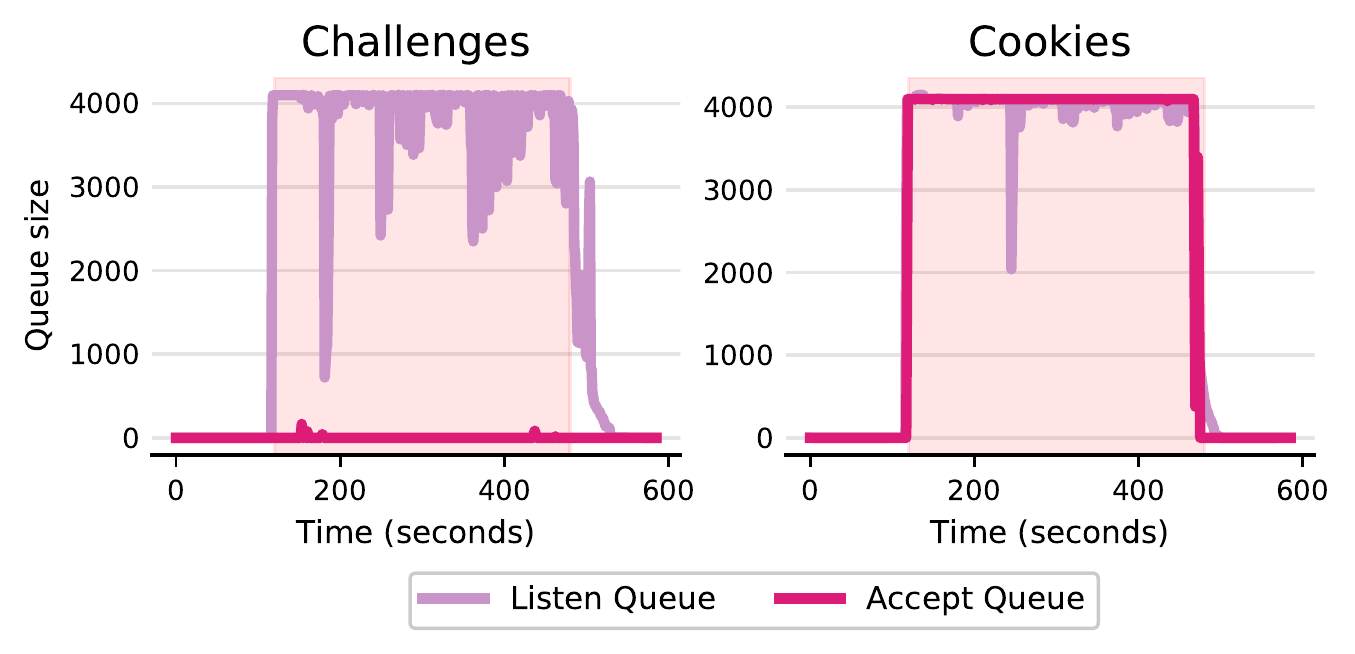}
  \caption{Listen and Accept queue size during a connection flood attack.}
  \label{fig:exp2:queue}
\end{figure}

\begin{figure}[t]
    \centering
    \includegraphics[width=1\columnwidth]{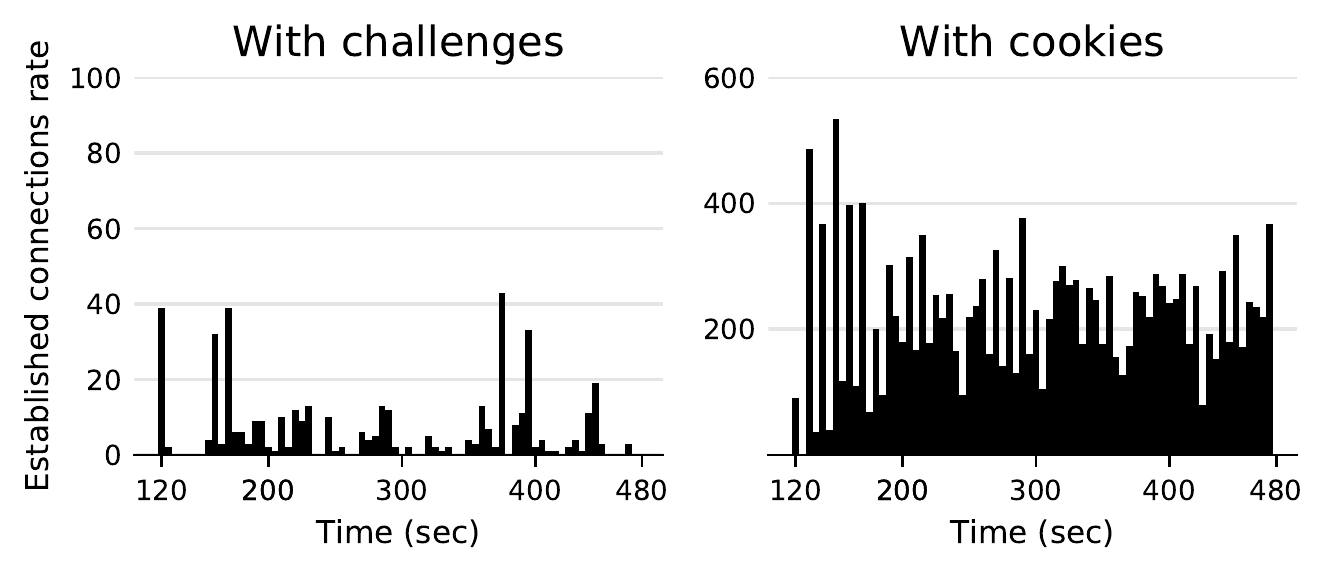}
    \caption{Effective attack rate for all attackers during a connection flood attack.}
    \label{fig:exp2:attacker}
\end{figure}

In the first scenario, we start a distributed SYN flood attack. Without protection, the SYN flood fills the
\texttt{listen} queue
with half-open TCP connections leading the server to drop new incoming connections. We measure the throughput at a
client and the server for three settings: (1) no protection (control settings), (2) TCP SYN cookies, and (3) TCP client
puzzles. Figure~\ref{fig:exp2:throughput} shows the throughput measured during the experiment. The attack duration,
shown by the shaded region, is initiated at $t=120$ and concludes at $t=480$. The throughput's behavior for both the
server and client are consistent; we therefore restrict our analysis to the server's case. For the control setting, the
server's throughput drops to zero as soon as the attack starts and returns to full capacity $30$ seconds after the
attack ends. On the other hand, SYN cookies are effective at rendering such an attack 
ineffective and ensuring a constant throughput at the server throughput the attack. By storing partial state of the
connection in the TCP sequence number instead of in the \texttt{listen} queue, SYN cookies provide protection against this type
of attack. Finally, when low difficulty puzzles are enabled, $(k,m)=(1,8)$, the throughput is unaffected during the
attack. Similarly to SYN cookies, the puzzles enable to reconstruction of a connection's state with no use of the
\texttt{listen} queue. However, when using the Nash equilibrium difficulty, $(k,m)=(2,17)$, the throughput is reduced to 50 Mbps during
the attack. The throughput reduction is due to the Nash equilibrium strategy being more aggressive than the easier
setting; in this scenario easy puzzles were enough to alleviate the attack as the botnet is not completing the
connection, this would not be the case for the next scenario.

In the second scenario, we use the attacker nodes to launch a distributed connection flood attack. We measure the same
metrics as the first scenario for three cases: no protection, SYN cookies, and TCP puzzles at Nash difficulties. The TCP
puzzles at difficulty of 8 bits were ineffective at protecting the server's state. For readability, we elected not to
show these results in this plot since we will revisit various difficulty settings in Section~\ref{sec:nash_diff_st}.

Figure~\ref{fig:exp2:2:throughput} shows the throughput of a client and the server during the experiment. Moreover, we
use the sparkline in the client plot to mark when the server sends a SYN-ACK packet with a challenge (bright tick) or
without a challenge (dark tick). The results show that SYN cookies are ineffective during a connection flood, the
server's throughput drops to $0$ as is the case when no protection is in place. 
In both those cases, the server needs $30$ seconds to detect the end of the flood and fully recover. On the other hand,
TCP puzzles at Nash difficulties provide tolerance against the flood attack. The throughput of the client and the server
is about 40\% of their respective nominal rates. It is interesting to note that the throughput periodically spikes
during the attack phase. This occurs because not all the requests of the clients require a puzzle, as shown by the dark
ticks in the sparkline during the attack phase. The performance improvement is due to the opportunistic nature of the
protection controller; that is, when the \texttt{listen} queue is not full, a connection request is answered without a challenge
allowing a host to take advantage of the resource instantly. We also note that easy puzzles were unable to affect the
attacker bots' connectivity rates and thus provided no better protection than SYN cookies. 
%This is in accordance with
%our observation from Experiment 1 stating that the server must elect difficulties high enough to be able to impact the
%attackers' request rates. 

Additionally, we measured the impact of the TCP challenges on the CPU utilization of the client, server, and attacker
machines. Figure~\ref{fig:exp2:cpu} shows that the impact on the server of generating and verifying the puzzles is
negligible, the server's CPU utilization stayed below 5\% and did not exceed its nominal (under regular load) value. 
In accordance with the nature of computational puzzles, the CPU utilization at the clients' machines increased during 
the attack, nevertheless still remaining well under 20\%, with an average of 10\%. The attacker machines, on the other
hand, witnessed a spike in CPU utilization during the period of the attack, reaching a maximum of 60\%.
These results highlight that our equilibrium difficulty setting achieved our desired goals of (1) putting minimal
overhead on the server to generate and verify puzzles, (2) inducing tolerable nuisance to the clients, and (3)
effectively rate limit the attackers' established requests rate and increase their computational burden. In fact, the
sudden increase in the CPU utilization at the botnet machines can serve as an alert to the owners of these machines to the
possible presence of malware.

We further study the impact of the TCP cookies and puzzles on the server's internal \texttt{listen} and \texttt{accept}
queues during a connection flood attack. Figure~\ref{fig:exp2:queue} illustrates that when SYN cookies are the only defensive
mechanism in place, both queues are fully saturated which explains the zero throughput observed by the benign clients. 
On the other hand, with TCP challenges in place, the \texttt{accept} queue is almost always empty, which is a direct result of
the puzzles being able to rate limit every user, whether benign or malicious, to an average of 2 requests per second. 
Additionally, the \texttt{listen} queue, although mostly saturated, shows frequent openings that are consistent with the
opportunistic nature of our implementation as highlighted by the sparklines in Figure~\ref{fig:exp2:throughput}.
% limit to 400ms per request
% so accept queue is empty
% listen queue is saturated but stuff open up in there => opportunistic

Finally, we show that TCP puzzles (at Nash difficulty) throttle the attacker's rate of established connections. We measured the effective
completed connections rate of all attackers as seen by the server during the connection flood. The measurements, shown
in Figure~\ref{fig:exp2:attacker}, reveal that the attack rate is not affected by TCP cookies, achieving an average rate
of 225 connections per second (cps), whereas TCP puzzles severely limit the attackers' rate at an average of 4 cps, a
reduction by a factor of 37.

\subsection{Experiment 3: Nash equilibrium strategy} \label{sec:nash_diff_st}

\begin{figure*}[t]
  \centering
  \includegraphics[width=\textwidth]{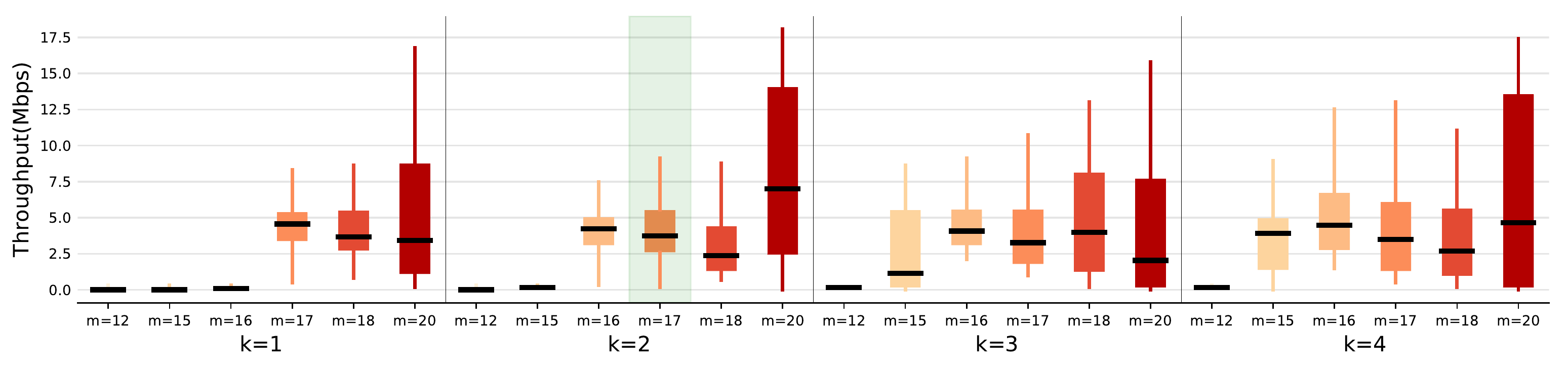}
  \vspace{-1em}
  \caption{Box plot of the client throughput for different puzzle difficulties.}
  \label{fig:nashbox}
\end{figure*}

In this
%third 
experiment, we show that the Nash equilibrium difficulty provides the optimal balance between the clients' throughput and the attack tolerance during an attack; the Nash equilibrium
%, $(k^*,m^*)=(2,17)$, 
is selected based on the capabilities of the clients and the defense requirements of the server.
We study the impact on the clients' throughput and the attacker's abilities of the Nash equilibrium strategy compared to other difficulties during a connection flood attack.

Figure~\ref{fig:nashbox} shows the average and standard deviation of the throughput of a client during attack. In general, for any k, if $m<12$, the ease of solving the challenges does not affect the attackers' rate, thus causing a denial of service. The Nash equilibrium strategy results in the most stable throughput with an average of $3.90$ Mbps and low variability. 
Even though some of the other settings have a higher average throughput, the throughput is highly unstable reaching zero at many times. Additionally, 
we note that when the difficulty is set to $(k=2,m=16)$, the throughput achieves a slightly better average with
comparable variability. However, the Nash difficulty setting provides the rate that balances the acceptable cost a
client is willing to pay (in terms of increased connection time and thus decreased throughput), and the server's ability
to tolerate state exhaustion attacks by throttling the attackers' rates. In fact, at the Nash difficulty, the puzzles mechanism reduced the attackers' average SYN sending rate from 2250 pps for $(k=2,m=16)$ to 1668 pps, and the average connection establishment rate from $30$ cps to $22$ cps. 

\subsection{Experiment 4: Botnet effectiveness}

\begin{figure}
\centering
  \includegraphics[width=\columnwidth]{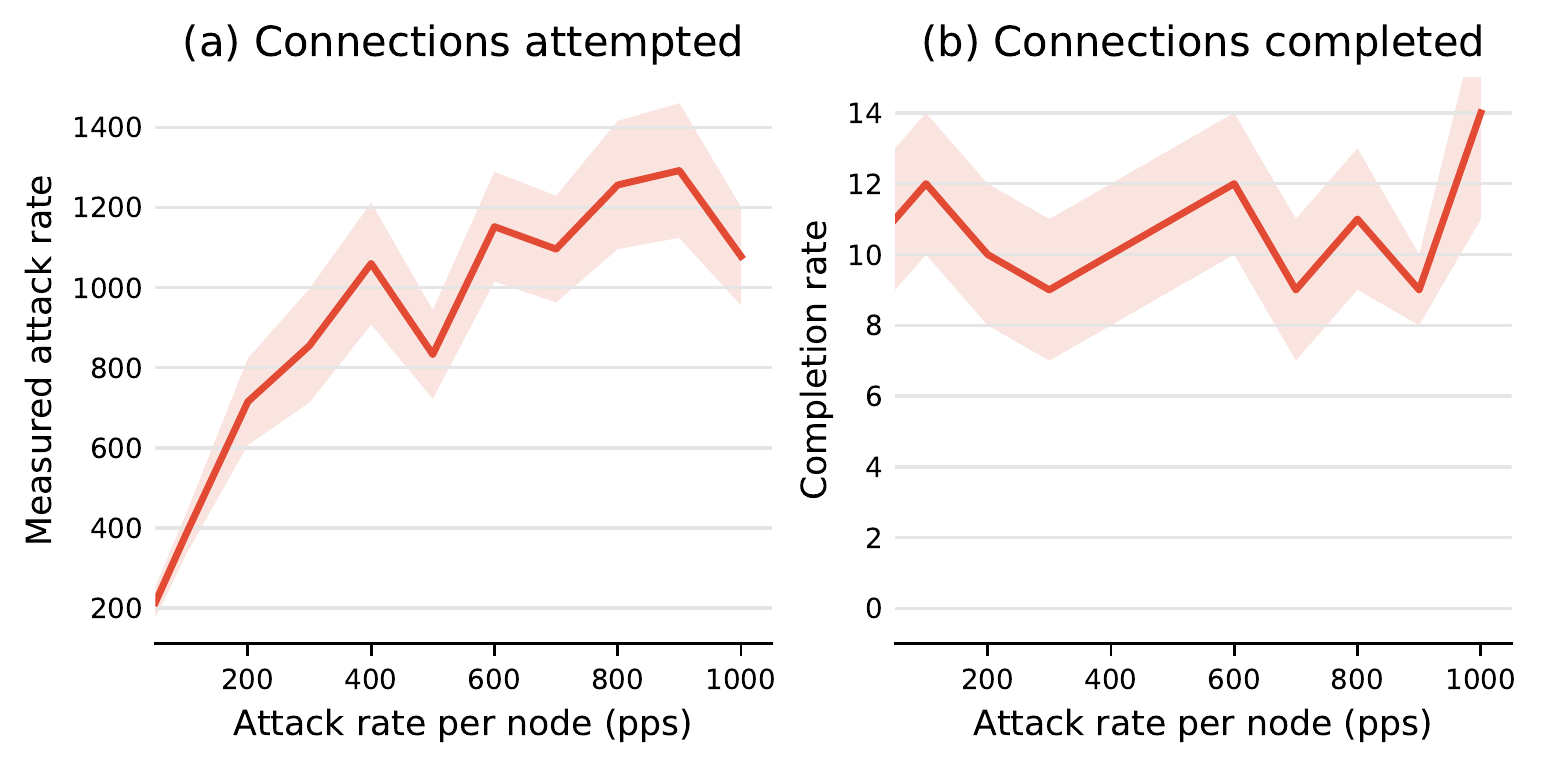}
  \caption{Impact of the puzzles on the attack as the flooding rate is increased.}
    \label{fig:attack_impact_rate}
\end{figure}

\begin{figure}
\centering
  \includegraphics[width=\columnwidth]{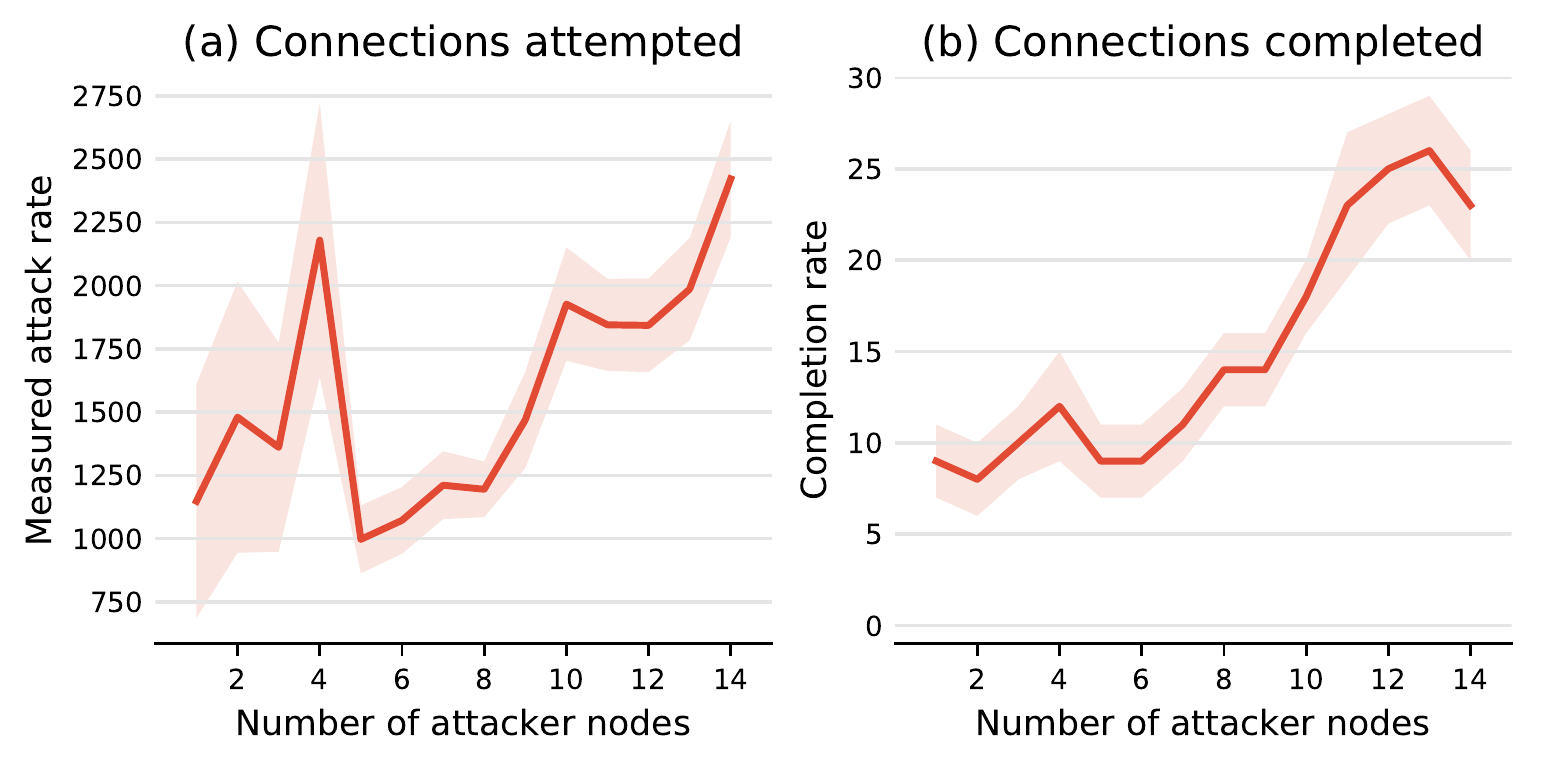}
  \caption{Impact of the puzzles on the attack as the number of participating machines is increased.}
    \label{fig:attack_impact_size}
\end{figure}

In the fourth experiment, we show that TCP puzzles increase the server's tolerance to a botnet and require attackers to increase their botnet's size to deny service. We vary the botnet's size and attack rate and measure the cumulative attack rate as seen by the botnet (referred to as the measured attacker rate) and the server (referred to as the connection completion rate). The connection completion rate is the effective attack rate that actually impacts the server. In the first scenario, we set the number of nodes in the botnet to 5 and vary the sending rate of each node between 100 and 1000 pps. Figure~\ref{fig:attack_impact_rate}(a) shows the measured attack rate and Figure~\ref{fig:attack_impact_rate}(b) shows the rate of completed connections. The results show that the TCP puzzles are capable of rate limiting the effective attack rate. As the per node attack rate increases, the cumulative attack rate increases reaching a peak of $1200$ pps.
%Note that the attack rate does not match the theoretical cumulative rate because of the implementation of \texttt{nping} is not scaling. 
However, the effective attack rate is limited to 11 cps regardless of the individual attack rate.
%TCP puzzles limit the attack rate because the solving the puzzles take more time than the  

In the second scenario, we vary the number of machines in the botnet while setting the cumulative attack rate to $5,000$ pps; each machine's rate is set at $5,000/(\text{size of botnet})$. 
Figures~\ref{fig:attack_impact_size}(a) and~\ref{fig:attack_impact_size}(b)
  show the measured attack rate and the effective attack rate as the number of machines are varied, respectively. The results show that attackers have to increase the size of their botnets to increase their effective attack rates. As more machines are added, the measured attack rate increases to peak at 2250 pps. 
%  
%Even when the effective attack rate increases linearly as the number of machines increases, it peaks at 25 pps --- one hundredth the measured attack rate. The rate increases, as opposed to the constant rate in the first scenario, because more machines are enlisted to solve the puzzles. However, the effective rate increase does not reflect the resources being committed to the botnet. At the current rate of increase, a botnet has to commit 500 machines to reach an effective attack rate of 5000 pps. 
The effective attack rate, although it linearly increases with the increase in the number of attack machines, it only peaks at 25 cps --- one hundredth the measured attack rate.
As opposed to the near-constant rate in the first scenario, the effective attack rate increased in this scenario since more machines have been enlisted in the botnet.
However, this increase does not reflect the increase in resources being committed to the botnet. At the current rate of increase, a botnet has to commit 500 machines to reach an effective attack rate of 5000 cps. 

In conclusion, for the attacker to increase the effective attack rate, it cannot increase the individual rates, it has
to increase the number of machines in the botnet; TCP puzzles at the Nash equilibrium difficulty significantly increases
the cost of a state exhaustion attack.

\subsection{Experiment 5: Adoption of TCP puzzles}
In this experiment, we show that a client solving the TCP puzzles is almost always able to connect to the server
regardless of whether the attacker elects to solve or ignore the puzzles or select a combination thereof. On the other
hand, a client that does not solve puzzles gets erratic service when the attacker is solving the puzzles and almost no
service when the attacker floods the server without solving any puzzles. In this experiment, we use machines that have
not been patched to support TCP puzzles; we test all four scenarios when (1) both the attacker and client do not solve
puzzles (NA,NC), (2) the attacker solves puzzles while the clients do not solve puzzles (SA,NC), 
(3) the clients solve puzzles with the attacker solving the puzzles, and (4) the clients solve puzzles with the attacker
not solving the puzzles. We group scenarios (3) and (4) together and label them as (*A, SC).
Figure~\ref{fig:adoption_impact} shows the percentage of completed connections for all the proposed scenarios. We
observe that a client solving puzzles is not denied service regardless of the attacker's type; this happens because the
attacker, being rate limited when solving puzzles and having its requests ignored when not solving challenge request, is
not able to fill the \texttt{accept} queue of the server. 
On the contrary, a non-solving client faced with a solving attacker experiences a highly variable percentage of
completed connections, reaching 0 at some instances. This happens due to the opportunistic nature of the TCP puzzles
controller (as observed in Experiment 2); the rate limiting impact on the attacker machines can empty slots in the
server's queues thus providing openings for the non-solving client to establish connections. 
However, when faced with an attacker that does not solve the challenges, the non-solving clients are denied service.
This happens because the attacker's vast resources beat the clients' requests for the resources freed by the TCP puzzles
controller. We note that the service promises provided by our puzzles implementation to non-compliant clients are
similar, and sometimes better, than those provided by network capabilities~\cite{Xiaowei2005}, while we almost always
provide service for those who comply.
\begin{figure}
\centering
  \includegraphics[width=0.8\columnwidth]{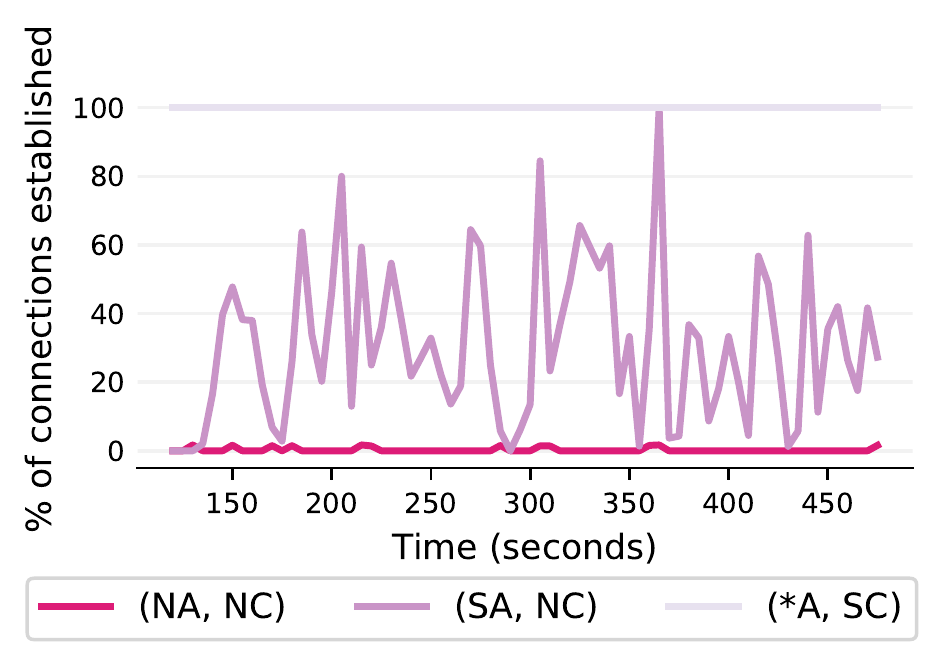}
  \caption{Percentage of established connections when TCP puzzles adoption is not complete -- (*A, SC) captures both (NA, SC) and (SA, SC).}
    \label{fig:adoption_impact}
\end{figure}

\subsection{Experiment 6: Impact on IoT devices}
%%For IoT (the processors used):
% 1.2 GHz 64/32-bit quad-core ARM Cortex-A53
% 1GHz ARM11 core 
\begin{table}[t]
\centering
\vspace{2.5em} 
\caption{Performance profile of embedded devices.}
\label{tb:iot_table}
%\resizebox{1.0\columnwidth}{!}{
%\begin{tabularx}{\columnwidth}{|c|c|c|}
\begin{tabular}{ccc}
\hline
Device        & Average hashing rate & Hashes performed in 400 ms \\ \hline
$D_1$ & 49,617                      & 19,901                                \\ 
%\hline
$D_2$ & 68,960                      & 26,563                                \\ 
%\hline
$D_3$ & 70,009                      & 27,987                                \\ 
%\hline
$D_4$ & 74,201                      & 29,732                                \\ \hline
\end{tabular}
%}
\vspace{-0.5em}
\end{table}
In the last experiment, we explore the impact of the client puzzles on the capabilities 
of IoT devices. We show that the flooding impact of IoT devices can be greatly reduced by virtue
of the support for TCP client puzzles. 
Table~\ref{tb:iot_table} shows the average hashing rate  and the number of
hashes performed in the $400ms$ interval,
of four {\em Raspberry Pi} devices: 
%\begin{itemize}
%\item 
$D_1$ equipped with a 700MHz ARM 11 processor,
%\item 
$D_2$ equipped with a 1GHz ARM 11 processor,
%\item 
$D_3$ equipped with a  quad-core 1.2 GHz  ARM Cortex-A53 processor, and
%\item 
$D_4$ equipped with a quad Core 1.2GHz Broadcom BCM2837 processor. 
%\end{itemize}
%is  equipped with 1.2 GhZ quad-core ARM Cortex-A53, 1GHz ARM11 core, and \ldots processors, respectively. 
The performance profiles show that, while the computational capabilities of these processors still 
enable them the opportunity to connect to a server that is deploying client puzzles, they 
greatly hinder their ability to effectively participate in a distributed TCP connection 
flood attack. Such devices might still be able to send SYN request packets, however, their ability to 
launch a flood of connections is limited. 
This subsequently implies that an attacker recruiting IoT devices as part of 
her bot army needs to employ much more resources to launch an effective 
attack. This achieves our goal of increasing the cost of  
conducting state exhaustion attacks. 
%affecting the service capabilities 
%of a  
%\begin{itemize}
%    \item We can show the plot and say that we can prevent IoT devices from playing a role in TCP connection floods. 
%    \item One might say here that IoT devices played a role in a UDP flood, so maybe the problem is not applicable here?
%\end{itemize}

\section{Limitations and Discussion} \label{s:security}
%\subsection*{Discussion of results}
In this section, we discuss the challenges facing the adoption of TCP client puzzles and provide an analysis of their limitations. % issues that might arise when challenges are in play. 

%\subsection{Software adoption}
{\em Software adoption}
As showcased by our experiments, there is a great benefit for servers to adopt client puzzles as a mechanism for
tolerance of state exhaustion attacks. By rate limiting users and protecting the server's internal queues,
client puzzles present service providers with a chance to provide continuous
service during state exhaustion attacks.  Our implementation presents the following incentives for ease of adoption.
First, a running server can easily support client puzzles by simply patching its kernel. 
%which can be done through regular software update channels. 
Second, our implementation does not introduce any changes to the normal  operation of the server and only sends
challenges when the queues overflow, the TCP stack remains intact otherwise. Finally, our patch is compatible with earlier versions
of the Linux kernel as long as it has support for the cryptographic operations.

While servers are incentivized to adopt TCP puzzles by the increased tolerance of attacks, clients, on the other hand, benefit from the promise of service even during attacks.
%even when servers are under large scale DDoS attacks. 
As shown in experiment 5, users that enable support for client puzzles 
are always able to connect to the server.
%will almost always be able to connect to the server.
%receive a guarantee of service 
%given their committed computation effort. 
The server can protect its queues from attackers that do not solve challenges while rate-limiting those that do, thus freeing up resources for the legitimate clients. 
%that provide correct
%solutions.  
For the users that choose not to adopt the challenges, they still receive full service under regular  load. However,
when under attack, those users will be in contention with the non-solving attackers for the spots that free up in the
queues by the server's opportunistic challenges controller. This scenario is no worse than the case when
no challenges are applied. Note that our theoretical formulation asserts the same outcome, a user that does not adopt
TCP challenges is similar to one that values the server's services at $w = 0$. Such a user prohibits the server from
providing it with protections since it cannot assign an appropriate puzzle difficulty. We believe that the opportunistic
challenges controller offers non-cooperating users service promises similar to those provided by network
capabilities~\cite{Xiaowei2005}, in which a small portion of the channel is shared between non-compatible users and attackers. 
%in which a small portion of the channel is reserved for clients that do not support capabilities for them to share with attackers as well.
Finally, we note that users can manually elect to disable the use of puzzles through the kernel's \texttt{sysctl} interface.
{\em Solution floods}
When implementing client puzzles as a scheme for resisting state exhaustion attacks, the server must commit computational 
resources to generate and verify puzzles for every incoming connection request. An attacker thus might 
attempt to overwhelm the server with verification tasks by sending a barrage of bogus solutions for which the server must 
perform cryptographic operations to verify their validity. We, however, argue that (1) the overhead of generating and verifying puzzles is negligible and (2) such an attack requires the attacker to commit vast resources that render it infeasible at low costs (thus alleviating the cost imbalance between the attacker and the defender). 

Our game-theoretic formulation takes into consideration the server's computational load and the number of cryptographic operations that it must perform when verifying a puzzle solution; it aims to maximize the clients' load while minimizing the server's load.
This is confirmed by the results discussed in Section~\ref{s:eval:main} and showcased in Figure~\ref{fig:exp2:cpu}. The computational effort employed by the server to generate and verify puzzles is negligible, with CPU utilization remaining below 5\% throughout the entirety of the attack.

Additionally, the server needs only to perform one hashing operation to generate a challenge, and an average of $(1 +\frac{k}{2})$ hashing operations when verifying a solution. We argue that a well-provisioned server can handle such a load while still being able to handle servicing the legitimate clients. 
For example, the server used in our experiments can perform 10.8 million hash operations per second. Thus an attacker aiming to overwhelm such a server would need to send at least 5,400,000 packets per second, each of which contains at least 60 bytes for IP and TCP headers. Therefore, attacking a web service that employs server replication and load balancing requires the attacker to employ additional resources coming at a much higher cost. We acknowledge that no solution can completely eliminate the possibility of a successful attack. However, we believe that our results show the client puzzles can significantly increase the cost of launching such attacks, thus restoring some balance to the uneven attack-defense play-field. 
We also note that, to mitigate such attacks, a server that employs client puzzles can benefit from deploying one or more proxy servers that solely handle the generation
and validation of puzzles, and then delegate the application processing of established connections to backend servers~\cite{waters2004}. 
%This approach, in fact, is commonly used in practice for content caching and incoming traffic filtering as means for DDoS resiliency~\cite{amazon-ddos}. 
%\begin{itemize}
% \item Doing the calculations on the server, we can do 455000 hashes per second on a single processor
% \item Thus with 24 processors, we have 10,800,000
% \item So the rate needed to simply fill up the server (regardless of scheduling and so) is 7,200,000 packets per second
% \item Packet size is at least 60 bytes so it is 432,000,000 bytes per second
% \item And now in terms of bandwidth we need a rate $> 3.45 Gbps$ to just nudge the server    
% \item We can also suggest outsourcing techniques like the ones presented in that paper
%\end{itemize}

%\subsection{Replay attacks}
{\em Replay attacks}
Since the server does not retain state about an incoming connection before receiving a valid challenge solution, an
attacker might capture legitimate clients' solutions and replay them to overflow the server's \texttt{accept} queue. We
note however that for a replayed solution to be validated, the attacker must retain the packet's parameters
(IP addresses, port numbers, and timestamps). Therefore a replayed solution can only be used to occupy one slot in the
server's queue at a time. Additionally, our implementation ensures that puzzles expire after a set
timeout interval. The timeout interval limits an attacker's ability to carry on a replay flood effectively, and thus our
implementation is resistant to such attacks.

%\subsection{Fairness of the puzzles mechanism}
{\em Fairness and power considerations}
Our model and implementation of TCP client puzzles use the same difficulty level for all of the users, regardless of whether they are legitimate clients or malicious attackers. We were motivated by two important factors when making this design choice: (1) the requirement to remain stateless until a solution is verified and (2) the difficulty of filtering malicious IP addresses when under attack. This nevertheless raises fairness concerns since the puzzles mechanism is nondiscriminatory and treats every request as a potential malicious request. We note however, that this behavior is only experienced when a victim server is under attack and not during regular operation. Additionally, we believe that our work on client puzzles, as posited in~\cite{feng03case}, can be a catalyst for future exploration of fairness schemes, such as {\em Puzzle Fair Queuing}. We plan to address this issue in our future work. 

%Client puzzles can 
%%Furthermore, the puzzles mechanism can 
%serve as temporary protection to maintain acceptable service under attack until appropriate mitigations have been
%deployed (filtering, authorization, upscaling). DDoS tolerance helps service providers minimize their financial losses
%while still providing acceptable service to legitimate clients at a low cost. However, this research enables avenues for
%further exploration of dynamic, per-user, challenge difficulty selection; we plan to address this issue in our future work. 

Furthermore, we do not currently address the important challenge arising from the presence of a non-uniform mix
between power-limited (e.g., mobile phones, IoT devices) and 
power-endowed (e.g., GPU-enabled desktops) benign devices. Although 
$w_{av}$ in our model attempts to capture the power-mix of clients at design time, it does not keep track of 
clients joining and leaving the system. In fact, this non-uniformity of CPU power is one of the main challenges
facing Bitcoin mining, with mining pools controlling $27\%$ of the market hashing power.
A possible solution would be to switch to memory-based proof-of-work schemes~\cite{abadi2005} that promise more
uniform solution requirements. Another possibility would be to adapt the difficulty of the sent puzzles based 
on the behavior of the observed traffic at the server, thus forming a closed control loop. We plan to investigate
such methods in our future work. 
 
% proof of memory may be better becuase of uniformity
% also can do dynamic stuff which is what we will be doing in future work

\vspace{-.5em}

%\section{Related Work} \label{s:related}
%\input{related}

\section{Conclusion} \label{s:conclusion}
In this paper, we presented a theoretical formulation and implementation of client puzzles as means for providing
tolerance to state exhaustion attacks. 
We addressed the challenge of selecting puzzle difficulties by modeling the problem as a Stackelberg game where the server is the leader 
%$that selects the difficulty 
and the clients are the followers. 
%that adapt their sending rate accordingly. 
We obtained the equilibrium solution that 
illustrates a tradeoff between the clients' valuation of the requested services and the server's service capacity. We then 
tackled the lack of puzzle implementations by providing a Linux kernel patch and evaluating its performance on the DETER 
testbed. 
Our results show that client puzzles are an effective mechanism that can be added to our arsenal of 
defenses to increase the resilience to multi-vectored DDoS attacks and restore the balance to the
attack-defense play field. 
%atn effective mechanism that provides continuous service for legitimate clients who solve the challenges while significantly hindering the attackers' ability to inflict damage on the server. We believe that our results provide incentives for both clients and servers to adopt puzzles as a scheme to DDoS tolerance. 

%In the future, we plan to address 

\bibliographystyle{ACM-Reference-Format}
\bibliography{references}

\appendix
\section{Proof of Theorem 1}\label{app:proof}
%-- This must be moved to the appendix
In order to analytically solve for the equilibrium solution of the game, we follow an approach similar to that
in~\cite{basar2002}. We start by noting that the Nash Equilibrium solution of the users' game is not affected if we add
the quantity
\[
  \sum_{j\neq i} \parans{w_j \log(1+x_j) - k\times 2^{m-1}x_j}
\]
to each users' utility function. Therefore we can now build a strategically equivalent game where each user's utility
function is 
\begin{equation}
  H\parans{x_1, \ldots, x_N, p} = \sum_{i=1}^N w_i\log (1+x_i) - k\times 2^{m-1} \bar{x} - \frac{1}{\mu - \bar{x}}
  \label{eq:sol_user_utility}
\end{equation}
Now looking at the Hessian matrix of $H$ we get 
\[
  \begin{aligned}
  \mathbf{H}_{ii} = \frac{\partial^2 H}{\partial x_i^2} = - \frac{w_i}{(1+x_i)^2} - \frac{2}{(\mu - \bar{x})^2} < 0, &
  \quad \forall{i} \\
  \mathbf{H}_{ij} = \frac{\partial^2 H}{\partial x_ix_j} = -\frac{2}{(\mu - \bar{x})^3} < 0, & \quad \forall{i,j}, \;i\neq
  j 
\end{aligned}
\]
Therefore $\mathbf{H}$ is negative-definite and thus $H$ is strictly concave for $0 \leq \bar{x} < \mu$. Additionally,
since $H \rightarrow -\infty$ as $\bar{x} \rightarrow \mu$, we can conclude the that optimization problem 
\[
  \underset{x_i \geq 0:\forall{i}, \bar{x} < \mu}{\max} H\parans{x_1, x_2, \ldots, x_N, p}
\]
admits a unique solution $\mathbf{x}^*=\set{x^*_1, \ldots, x^*_N}$ in the interval $0 \leq \bar{x} < \mu$ which corresponds 
to the Nash Equilibrium strategies to the users' game as defined in Equation~\eqref{eq:app_user_utility}.
We obtain the solution strategies by solving the first order condition of $H$ where for $i \in \set{1, \ldots, N}$
\[
  \frac{\partial H}{\partial x_i} \parans{x^*_1, \ldots, x^*_N, p} = 0
\]
which translates to 
\begin{equation}
  \frac{w_i}{1+x^*_i} - k \times 2^{m-1} - \frac{1}{\mu -\bar{x}^*} = 0, \; \forall{i}
  \label{eq:first_order_cond}
\end{equation}

\noindent Let $y_i = 1 + x_i$, $\bar{y} = \sum_{i=1}^N y_i = N + \bar{x}$, and $\bar{w} = \sum_{i=1}^N w_j$, from which
we obtain 
\[
  \frac{w_i}{y_i} = \frac{w_j}{y_j},\quad \forall{i,j \in \set{1,\ldots,N}}
\]
or equivalently 
\[
  y_i = \frac{w_i}{w_j}y_j, \quad \forall{i,j \in \set{1,\ldots,N}}
\]
We can then rewrite $\bar{y}$ as 
\[
  \bar{y} = \sum_{i=1}^{N} y_i = \sum_{i=1}^N \frac{w_i}{w_j} y_j = \frac{\bar{w}}{w_j} y_j
\]
and thus we can express \eqref{eq:first_order_cond} in terms of $\bar{y}$ as 
\begin{equation}
  \tilde{L}(\bar{y}) = \frac{\bar{w}}{\bar{y}} - k \times 2^{m-1} - \frac{1}{(\mu + N - \bar{y})^2} = 0
  \label{eq:first_order_ybar}
\end{equation}
We can thus turn our attention to solving Equation~\eqref{eq:first_order_ybar} for $N \leq \bar{y} < \mu + N$. 
Since $\frac{\partial \tilde{L}}{\partial \bar{y}} = - \frac{\bar{w}}{\bar{y}^2} - \frac{2}{(\mu + N -\bar{y})^2} < 0$, 
$\tilde{L}$ is strictly decreasing. Additionally, $\tilde{L}(\bar{y}) \rightarrow -\infty$ as $\bar{y} \rightarrow \mu +
N$. We therefore need $\tilde{L}(N)$ to be non-negative so that $\tilde{L}(\bar{y})$ would admit a solution in 
the interval $N \leq \bar{y} < \mu + N$, which translates to 
\[
  \tilde{L}(N) = \frac{\bar{w}}{N} - k \times 2^{m-1} - \frac{1}{\mu^2} > 0
\]
or equivalently 
\begin{equation}
  k \times 2^{m-1} < \frac{\bar{w}}{N} - \frac{1}{\mu^2} := \hat{r}
  \label{eq:cond_on_pmax}
\end{equation}
We can see $\hat{r}$ as the maximum possible difficulty that the service provider can select while guaranteeing that a
solution for the clients' game exists. We also notice that if the provider had infinite resource, i.e., $\mu \rightarrow
\infty$, $\hat{r} \rightarrow \frac{\bar{w}}{N}$ which suggests that a client should not be charged a price higher than
the average user valuation of the provider's services. 

Furthermore, it is beneficial for the service provider to ensure that all clients participate in the game, i.e., that 
$x_i > 0$ for all $i \in \set{1,2,\ldots, N}$. This therefore translates to the conditions on $\bar{y}$
\begin{equation}
  \bar{y}  > \frac{\bar{w}}{w_i} \quad \forall i
  \label{eq:condition_on_y}
\end{equation}

Now let $\bar{y}(k,m)$ be a solution to Equation~\eqref{eq:first_order_ybar} that satisfies
condition~\eqref{eq:condition_on_y} and where $(k,m)$ satisfy condition~\eqref{eq:cond_on_pmax}, and let $\bar{x}(k,m)$ be
the corresponding value of $\bar{x}$. We turn to the provider's problem of finding the optimal pricing $p^* = (k^*,
m^*)$ that maximizes 
\begin{equation}
  I(p) := \parans{k\times 2^{m-1} -2 - \frac{k}{2}}\bar{x}(k,m)
  \label{eq:optimization_unsolvable}
\end{equation}

In order to obtain an analytical solution to the optimization problem in Equation \eqref{eq:optimization_unsolvable} we
make use of the following approximation. We solve for the pricing $\tilde{p}(\tilde{k},\tilde{m})$ that maximizes
\begin{equation}
  \tilde{I}(p):= \parans{k\times 2^{m-1}}\bar{x}(k,m)
  \label{eq:reduced_optimization}
\end{equation}

\begin{lem}
$|I(p^*) - \tilde{I}(\tilde{p})| < c$ for some constant $c > 0$, 
where $p^*$ and $\tilde{p}$ are the solutions that
maximize $I$ and $\tilde{I}$, respectively.
%  Solving for $\tilde{p}^*$ brings us within a constant factor of the optimal solution $p^*$. 
%  \todo[inline]{This is very sloppy, rewrite to better reflect the meaning.}
\end{lem}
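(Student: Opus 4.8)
The plan is to exploit the fact that the two objectives differ only by a nonnegative, controllable term. For any feasible puzzle $p=(k,m)$, subtracting the definitions of $I$ and $\tilde{I}$ gives
\begin{equation}
  \Delta(p) := \tilde{I}(p) - I(p) = \parans{2 + \tfrac{k}{2}}\bar{x}(k,m) \geq 0,
  \label{eq:delta_def}
\end{equation}
since $\bar{x}(k,m) \geq 0$ on the feasible region. In particular $I(p) \leq \tilde{I}(p)$ pointwise, so $\tilde{I}(\tilde{p}) \geq \tilde{I}(p^*) \geq I(p^*)$; this already yields $\tilde{I}(\tilde{p}) - I(p^*) \geq 0$ and lets me discard the absolute value.

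For the matching upper bound I would invoke the optimality of $p^*$ for $I$: since $p^*$ maximizes $I$, $I(p^*) \geq I(\tilde{p}) = \tilde{I}(\tilde{p}) - \Delta(\tilde{p})$. Rearranging sandwiches the gap,
\[
  0 \leq \tilde{I}(\tilde{p}) - I(p^*) \leq \Delta(\tilde{p}),
\]
so it suffices to bound $\Delta$ uniformly over feasible puzzles. From the equilibrium analysis the aggregate rate obeys $\bar{x}(k,m) < \mu$ for every admissible $(k,m)$, and the existence condition~\eqref{eq:cond_on_pmax}, namely $k\times 2^{m-1} < \hat{r}$, together with $m \geq 1$ (so that $2^{m-1} \geq 1$) forces $k < \hat{r}$. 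Substituting both bounds into~\eqref{eq:delta_def} gives $\Delta(\tilde{p}) < \parans{2 + \tfrac{\hat{r}}{2}}\mu =: c$, a finite positive constant depending only on $\bar{w}$, $N$, and $\mu$, which establishes the claim.

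The crux of the argument is this last step, specifically the boundedness of $k$ over the feasible set. This does not follow from the objective alone but is inherited from the followers' equilibrium: a puzzle whose difficulty $k\times 2^{m-1}$ exceeds $\hat{r}$ admits no interior equilibrium, so such puzzles are excluded from the feasible region by construction, and that exclusion is precisely what pins $k$ below $\hat{r}$. Once $k < \hat{r}$ and $\bar{x} < \mu$ are in hand, the remainder is an elementary two-sided estimate. I would also remark that the constant $c$ is loose: because $\bar{x}(k,m)$ decays as the difficulty grows while the correction $2 + \tfrac{k}{2}$ grows only linearly, the actual gap is far smaller; but the lemma requires only the existence of some finite $c$, which justifies replacing the intractable $I$ by the analytically convenient $\tilde{I}$ when solving the provider's problem.
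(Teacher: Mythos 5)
Your proof is correct and follows essentially the same route as the paper's: both sandwich the gap using the pointwise domination $I \le \tilde{I}$ together with the two optimality conditions, and then bound the correction term $\parans{2+\tfrac{k}{2}}\bar{x}$ by a constant of the form $\parans{2+\tfrac{k}{2}}\mu$. The only difference is in how the $k$ appearing in that constant is pinned down --- the paper passes to an equivalent price $p'$ with minimal $k'$ realizing the same product $k\times 2^{m-1}$, whereas you bound $\tilde{k}$ directly from the feasibility condition $k\times 2^{m-1}<\hat{r}$ with $2^{m-1}\ge 1$; both yield a valid finite $c$.
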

\begin{proof}
  Let $p^* = (k^*, m^*)$ and $\tilde{p} = (\tilde{k}, \tilde{m})$ be the prices that maximize $I(p)$ and $\tilde{I}(p)$,
  respectively. We therefore have that 
  \[
	\tilde{k}\times 2^{\tilde{m}-1} \bar{x}(\tilde{k}, \tilde{m}) \geq k\times 2^{m-1} \bar{x}(k,m), \; \forall k,m
  \]
  Let $p'=(k',m')$ be a price with minimum $0 < k' \leq \tilde{k}$ such that $k'\times 2^{m'-1} = \tilde{k}\times 2^{\tilde{m}-1}$ and
  $I(k',m') \geq I(\tilde{k}, \tilde{m})$. We can therefore write 
  \[
	I(p') \geq I(k,m) - (\frac{k'}{2} + 2)\bar{x}(k',m'), \; \forall{k,m}
  \]
  and since $I(p^*) \geq I(p) \; \forall{p}$ we can therefore conclude that 
  \[
	|I(p^*) - I(p')| \leq (\frac{k'}{2} +2)\bar{x}(k',m') < \parans{(\frac{k'}{2} + 2)\mu} := c
  \]
  and since $\bar{x}$ only depends on $k\times 2^{m-1}$ and not on the individual values of $k$ and $m$, $p'$ also
  maximizes Equation~\eqref{eq:reduced_optimization} and thus solving for $p'$ brings us within a constant $c$ of $p^*$, the maximum of $I$. 
\end{proof}

We can now proceed with finding a solution for Equation~\eqref{eq:reduced_optimization} following the approach presented
in~\cite{basar2002}. By using the one-to-one correspondance between $k\times 2^{m-1}$ and $\bar{y}$ (and thus $\bar{x}$)
presented in Equation~\eqref{eq:first_order_ybar}, we can substitute $\bar{y}$ in~\eqref{eq:reduced_optimization} and then
compute $p^*$ using the solution to the obtained equation. We then write the equivalent problem as finding $\bar{y}^*$
such that 
\begin{equation}
  \bar{y}^* = \underset{N < \bar{y} < N+\mu}{\argmax} \parans{\frac{\bar{w}}{\bar{y}} - \frac{1}{(\mu+N-\bar{y})^2}}(\bar{y}
  - N)
  \label{eq:equiv_optimization}
\end{equation}

We define $G(\bar{y}) := \parans{\frac{\bar{w}}{\bar{y}} - \frac{1}{(\mu+N-\bar{y})^2}}(\bar{y} - N)$. It is easy to see
that $\frac{\partial^2 G}{\partial \bar{y}^2} < 0$ and thus $G(\bar{y})$ is strictly concave. Additionally, $G(N + \mu)
\rightarrow -\infty$, we can thus conclude that $G(\bar{y})$ admits a unique maximum in the open interval $(N, N+\mu)$. 
We can then solve the first order condition 
\begin{equation}
  \frac{\partial G(\bar{y})}{\partial \bar{y}} := \frac{\bar{w}N}{\bar{y}^2} - \frac{\mu+\bar{y}-N}{(\mu+N-\bar{y})^3} = 0
  \label{eq:final_optimization}
\end{equation}

Obtaining a closed form solution for $\bar{y}^*$ is not possible for finite $N$. Therefore we solve Equation \eqref{eq:final_optimization} asymptotically (i.e., as $N \rightarrow \infty$) as proposed in~\cite{basar2002}. For that, we make the following
assumptions. (1) We assume that the average user preference $w_{av}(N) = \frac{\bar{w}}{N}$ has a well defined
limit $w_{av}$ as $N \rightarrow \infty$.
(2) We assume that as the number of users grows larger, the service provider can always service a fraction of its users,
even if that fraction is small. In other words, we assume that $\underset{N\rightarrow \infty}{\lim} \frac{\mu}{N} = \alpha$ for
some $\alpha > 0$. For convenience, we rewrite Equation~\eqref{eq:final_optimization} in terms of $x_{av}(N) =
\frac{\bar{x}}{N}$ and $w_{av} (N)$ as $N \rightarrow \infty$ as 
\begin{equation}
  \frac{w_{av}}{(1+x_{av}(N))^2} = \frac{\alpha + x_{av}(N)}{(\alpha-x_{av}(N))^3N^2}
  \label{eq:refined}
\end{equation}

Equation~\eqref{eq:refined} possesses a solution for $x_{av}(N)$ iff,
\[
\underset{N \rightarrow \infty}\lim \left( \alpha - x_{av}(N) \right)^3N^2 = \gamma
\]
for some $\gamma > 0$. We thus substitute back in Equation~\eqref{eq:first_order_ybar} and obtain the solution 
\begin{equation} \label{eq:solution_long}
k^* \times 2^{m^*-1} \sim \frac{w_{av}}{\alpha + 1} + \frac{2 \alpha - 1}{\gamma^{\frac{2}{3}}N^{\frac{2}{3}}}    
\end{equation}
where $f \sim g$ denotes the fact that 
$\underset{N \rightarrow \infty}{\lim} \frac{f}{g} = 1$.

Since we are considering the asymptotic solution, we restrict our attention to the first order
term of the solution in Equation~\eqref{eq:solution_long} and thus obtain our desired form
\begin{equation} \label{eq:solution_final}
    \boxed{k^* \times 2^{m^*-1} = \frac{w_{av}}{\alpha + 1}}
\end{equation}
In fact, as shown in~\cite{basar2002}, Equation~\eqref{eq:solution_final} corresponds to the solution of the same problem when ignoring the service delay at the server. Since SYN and connection flood attacks target the TCP protocol and not the application layer service, it is convenient for the purposes of this paper to only consider the first order term of Equation~\eqref{eq:solution_long}, thus completing the proof. \hfill $\blacksquare$

\appendix
\section{Experiment Topology and machines}\label{app:topology}
Figure~\ref{fig:topology} shows the topology of our experiment setup. All the hosts are physical machines. The following is a list of the hardware specifications of the machines used in the setup. The nomenclature pcxxx and 
bpcxxx refers to the specific hardware models provided by the DETER testbed. Our client and attacker machines use a combination of those models while our server is deployed on a more capable HP server having dual Intel Xeon hex-core processors running at 2.2 Ghz, 24 GB of RAM, 1 TB of storage, and a 10-Gigabit network interface. 
\begin{figure}
\center
	\includegraphics[width=1\columnwidth]{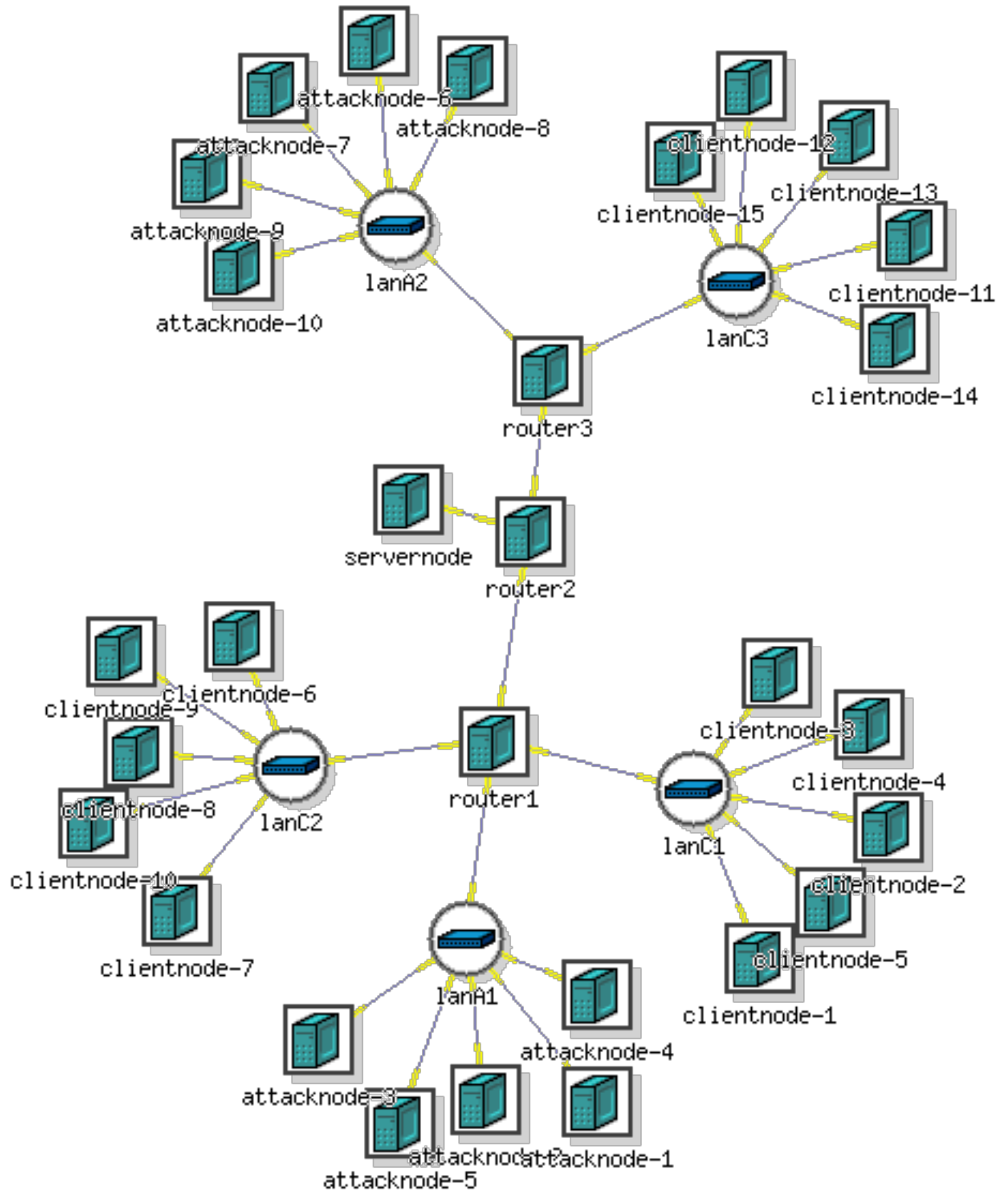}
	\caption{Scenario Topology}
	\label{fig:topology}
\end{figure}

pc3000 and bpc3000 have the following features:
\begin{itemize}
\item Dell PowerEdge 1850 Chassis.
\item Dual 3Ghz Intel Xeon processors.
\item 2 GB of RAM
\item One 36Gb 15k RPM SCSI drive.
\item 4 Intel Gigabit experimental network ports.
\item 1 Intel Gigabit experimental network port.
\end{itemize}

The pc2133 and bpc2133 machines have the following features:
\begin{itemize}
	\item Dell PowerEdge 860 Chasis
\item One Intel(R) Xeon(R) CPU X3210 quad core processor running at 2.13 Ghz
\item 4GB of RAM
\item One 250Gb SATA Disk Drive
\item One Dual port PCI-X Intel Gigabit Ethernet card for the control network.
\item One Quad port PCIe Intel Gigabit Ethernet card for experimental network.
\end{itemize}

High Density SuperMicro MicroCloud Chassis that fits 8 nodes in 3u of rack space have the following features:
\begin{itemize}
	\item One Intel(R) Xeon(R) E3-1260L quad-core processor running at 2.4 Ghz
Intel VT-x and VT-d support
\item 16GB of RAM
\item One 250Gb SATA Western Digital RE4 Disk Drive
\item 5 experimental interfaces
\item One Dual port PCIe Intel Gigabit Ethernet card for the control network and an experimental port
\item One Quad port PCIe Intel Gigabit Ethernet card for experimental network
\end{itemize}

HP Proliant DL360 G8 Server have the following features:
\begin{itemize}
	\item Dual Intel(R) Xeon(R) hexa-core processors running at 2.2 Ghz with 15MB cache Intel VT-x support
\item 24GB of RAM
\item One 1Tb SATA HP Proliant Disk Drive 7.2k rpm G8 (boot priority)
\item One 240Gb SATA HP Proliant Solid State Drive G8
\item Two experimental interfaces:
\item One Dual port PCIe Intel Ten Gigabit Ethernet card for experimental ports
\item One Quad port PCIe Intel Gigabit Ethernet card, presently with one port wired to the control network
\end{itemize}

The bpc2800 machines have the following features:
\begin{itemize}
	\item Sun Microsystems Sun Fire V60 Chassis
\item One Intel(R) Xeon(R) CPU dual core processor running at 2.8 GHz
\item 2 GB of RAM
\item One 36 GB SCSI Disk Drive
\item Two Dual port PCI-X Intel Gigabit Ethernet cards, 1 port for control network and 3 ports for experimental network
\item One Single port PCI-X Intel Gigabit Ethernet card for experimental network
\end{itemize}

%The client and attacker machines are Dell machines with dual 3 Ghz Intel Xeon processors, 2 GB of RAM, 36 GB of storage, and a Gigabit network interface.
%
%The server is deployed on a more capable HP server machine. The server has dual Intel Xeon hexa-core processors running at 2.2 Ghz, 24 GB of ram, 1 Tb of storage, and a Ten-Gigabit network interface.

Finally, for our IoT experiment, we used four Raspbery Pi boards ranging over the released model revisions. Specifically we used (1) a Raspberry Pi Model B, revision 2.0, (2) a Raspberry Pi Zero, (3) a Raspbery Pi 2 Model B v 1.1, and (4) a Raspberry Pi 3 Model B v 1.2. We deployed the Linux kernel 4.9.65 on all four boards and used the kernel's cryptographic API to profile their performance. 

\end{document}